\newtheorem{theorem}{Theorem}
\newtheorem{corollary}[theorem]{Corollary}
\newtheorem{invariant}{Invariant}
\newcounter{definition}
\newenvironment{definition}[1][]{\refstepcounter{definition}\par\medskip\noindent%
\textbf{Definition~\thedefinition. #1} \rmfamily}{\medskip}
\newcounter{example}
\newenvironment{example}[1][]{\refstepcounter{example}\par\medskip\noindent%
\textbf{Example~\theexample. #1} \rmfamily}{\medskip}
\newcounter{observation}
\newenvironment{observation}[1][]{\refstepcounter{observation}\par\medskip\noindent%
\textbf{Observation~\theobservation. #1} \rmfamily}{\medskip}
\newcommand{\lang}[1]{\texttt{#1}\xspace}
\newcommand{\chp}{\lang{CHP}}
\newcommand{\qcl}{\lang{QCL}}
\newcommand{\qpro}{\lang{QuIDDPro}}
\newcommand{\qplt}{\lang{QPLite}}
\newcommand{\quipu}{\lang{Quipu}}
\newcommand{\bra}[1]{\left\langle{#1}\right\vert}
\newcommand{\ket}[1]{\left\vert{#1}\right\rangle}
\newcommand{\cC}{\mathcal{C}}
\newcommand{\cM}{\mathcal{M}}
\newcommand{\cF}{\mathcal{F}}
\def\blfootnote{\xdef\@thefnmark{}\@footnotetext}
\definecolor{blue}{rgb}{0,0,0}
\begin{document}

\title{Simulation of Quantum Circuits \\ via Stabilizer Frames}
\author{\IEEEauthorblockN{H\'{e}ctor J. Garc\'{i}a \hspace{25pt} 
Igor L. Markov \\}
\IEEEauthorblockA{University of Michigan, EECS, Ann Arbor, MI 48109-2121 \\} 
\IEEEauthorblockA{\small \{hjgarcia, imarkov\}@eecs.umich.edu}
}


\maketitle

\begin{abstract}
Generic quantum-circuit simulation appears intractable for conventional computers and may be unnecessary because useful quantum circuits exhibit significant structure that can be exploited during simulation. For example, Gottesman and Knill identified an important subclass, called {\em stabilizer circuits}, which can be simulated efficiently using group-theory techniques and insights from quantum physics.
Realistic circuits enriched with quantum error-correcting codes and fault-tolerant procedures are dominated by stabilizer subcircuits and contain a relatively small number of non-Clifford components. Therefore, we develop new data structures and algorithms that facilitate
parallel simulation of such circuits. {\em Stabilizer frames} offer more compact storage than previous approaches but require more sophisticated bookkeeping. Our implementation, called \quipu, simulates certain quantum arithmetic circuits (e.g., reversible ripple-carry adders) in polynomial time and space for equal superpositions of $n$-qubits. On such instances, known linear-algebraic simulation techniques, such as the (state-of-the-art) BDD-based simulator \qpro, take exponential time. We simulate quantum Fourier transform and quantum fault-tolerant circuits using \quipu, and the results demonstrate that our stabilizer-based technique 
empirically outperforms \qpro in all cases. While previous 
high-performance, structure-aware simulations of quantum circuits were difficult to parallelize, 
we demonstrate that \quipu can be parallelized with a nontrivial computational speedup.
\end{abstract}



\section{Introduction}  
\label{sec:intro}

Quantum information processing manipulates quantum states rather than 
conventional $0$-$1$ bits. It has been demonstrated with a variety of physical 
technologies (NMR, ion traps, Josephson junctions in superconductors, optics) 
and used in recently developed commercial products. Examples of such products
include MagiQ's quantum key distribution system and ID-Quantique's
quantum random number generator. 
Shor's factoring algorithm~\cite{Shor} and Grover's search algorithm~\cite{Grover}
apply the principles of quantum information to carry out computation 
{\em asymptotically} more efficiently than conventional computers. 
These developments fueled research efforts to design, build and program 
scalable quantum computers. 
Due to the high volatility of quantum information, 
quantum error-correcting codes (QECC) and effective fault-tolerant (FT) architectures 
are necessary to build reliable quantum computers. 
For instance, the work in \cite{Oskin} describes practical FT architectures
for quantum computers, and \cite{Isai} explores architectures for 
constructing reliable communication channels via distribution of high-fidelity 
EPR pairs in a large quantum computer.
Most quantum algorithms are described in terms of {\em quantum circuits} and,
just like conventional digital circuits, require functional
simulation to determine the best FT design choices given limited resources.
In particular, {\em high-performance simulation} is a key component in quantum
design flows \cite{Svore} that facilitates analysis of
trade-offs between performance and accuracy.
Simulating quantum circuits on a conventional computer is a difficult problem. 
The matrices representing quantum gates, and the vectors that model quantum 
states grow exponentially with an increase in the number of {\em qubits}
-- the quantum analogue of the classical bit.
Several software packages have been 
developed for quantum-circuit simulation including Oemer's
Quantum Computation Language (\qcl) \cite{Oemer} and Viamontes'
Quantum Information Decision Diagrams (QuIDD) implemented 
in the \qpro package~\cite{Viamontes}. While \qcl simulates
circuits directly using state vectors, \qpro uses a
variant of binary decision diagrams to store state vectors more
compactly in some cases. Since the state-vector representation
requires excessive computational resources in general, 
simulation-based reliability studies (e.g. fault-injection analysis) 
of quantum FT architectures using general-purpose
simulators has been limited to small quantum circuits~\cite{Boncalo}. 
Therefore, {\em designing fast 
simulation techniques that target quantum FT circuits 
facilitates more robust reliability analysis of larger 
quantum circuits}.

\begin{figure}[!b]
    \small
    \[
        H = \frac{1}{\sqrt{2}}\begin{pmatrix}
            1 & 1 \\
            1 & -1 \end{pmatrix} \hspace{5pt}
        P = \begin{pmatrix}
            1 & 0 \\
            0 & i \end{pmatrix} \hspace{5pt}
        CNOT = \begin{pmatrix}
            1 & 0 & 0 & 0 \\
            0 & 1 & 0 & 0 \\
            0 & 0 & 0 & 1 \\
            0 & 0 & 1 & 0 \end{pmatrix}
    \]
    \vspace{-10pt}
    \caption{\label{fig:chp} {\color{blue} Clifford} gates: Hadamard (H), Phase (P) and controlled-NOT (CNOT).
 	}
\end{figure}

\ \\\noindent
{\bf Stabilizer circuits and states}.
Gottesman~\cite{Gottes98} and Knill identified an important subclass of quantum circuits, called 
{\em stabilizer circuits}, which can be simulated efficiently on classical
computers. Stabilizer circuits are exclusively composed of {\em {\color{blue} Clifford} gates} -- 
Hadamard, Phase and controlled-NOT gates (Figure~\ref{fig:chp}) followed by one-qubit measurements in the
computational basis. Such circuits are applied to a computational basis state
(usually $\ket{00...0}$) and produce output states known as {\em stabilizer states}.
Because of their extensive applications in QECC and
FT architectures, stabilizer circuits 
have been studied heavily \cite{AaronGottes, Gottes98}.
Stabilizer circuits can be simulated in {\em polynomial-time} by keeping track of the Pauli operators that stabilize\footnote{An operator $U$ is said to stabilize a state iff $U\ket{\psi}=\ket{\psi}$.} the
quantum state. Such {\em stabilizer operators} are maintained during 
simulation and uniquely represent stabilizer states up to an unobservable 
global phase.\footnote{According to quantum physics, 
the global phase $exp(i\theta)$ of a quantum state is unobservable
and does not need to be simulated.} Thus, this technique 
offers an {\em exponential improvement} over the computational resources  
needed to simulate stabilizer circuits using vector-based representations. 

Aaronson and Gottesman \cite{AaronGottes} proposed an improved 
technique that uses a bit-vector representation to simulate stabilizer circuits. 
Aaronson implemented this simulation approach in his \chp software package. Compared to
other vector-based simulators (\qpro, \qcl) the technique in \cite{AaronGottes} 
does not maintain the global phase of a state and simulates each {\color{blue} Clifford} gate 
in $\Theta(n)$ time using $\Theta(n^2)$ space. The overall runtime
of \chp is dominated by measurements, which require $O(n^2)$ time 
to simulate.

\ \\ \noindent
{\bf Stabilizer-based simulation of generic circuits}. We propose a generalization
of the stabilizer formalism that admits simulation of {\em {\color{blue} non-Clifford} gates}
such as Toffoli\footnote{The Toffoli gate is a $3$-bit gate that 
maps ($a$, $b$, $c$) to ($a$, $b$, $c \oplus (ab)$).} gates.
This line of research was first outlined in \cite{AaronGottes}, where the authors
describe a stabilizer-based representation that stores an arbitrary quantum state 
as a sum of density-matrix\footnote{Density matrices are self-adjoint positive-semidefinite matrices of trace $1.0$, that describe the statistical state of a quantum system \cite{NielChu}.} 
terms. In contrast, we {\em store arbitrary states as superpositions\footnote{
A superposition is a norm-$1$ linear combination of terms.} of pure
stabilizer states}. Such superpositions are stored more compactly than 
the approach from \cite{AaronGottes}, although we do not handle mixed stabilizer states.
The key obstacle to the more efficient pure-state approach has been the need to 
maintain the global phase of each stabilizer state in a superposition, where such 
phases become relative. We develop a new algorithm to overcome this obstacle.
We store stabilizer-state superpositions compactly using our
proposed {\em stabilizer frame} data structure.
To speed up relevant algorithms, we {\em store generator sets for each 
stabilizer frame in row-echelon form} to avoid expensive Gaussian elimination during simulation.
The main advantages of using stabilizer-state superpositions
to simulate quantum circuits are: 

\begin{itemize}
	\vspace{2pt}
	\item[(1)] {\em Stabilizer subcircuits are simulated with high efficiency.}
	\vspace{2pt}
	\item[(2)] {\em Superpositions can be restructured and compressed
	on the fly during simulation to reduce resource requirements.}
	\vspace{2pt}
	\item[(3)] {\em Operations performed on such superpositions
	can be computed in parallel and lend themselves
	to distributed or asynchronous processing.}
\end{itemize}

Our stabilizer-based technique simulates certain quantum arithmetic 
circuits in polynomial time and space 
for input states consisting of equal superpositions of 
computational-basis states. On such instances, 
well-known generic simulation techniques take exponential time.
We simulate various quantum Fourier transform
and quantum fault-tolerant circuits, and the results demonstrate that 
our data structure leads to orders-of-magnitude improvement in 
runtime and memory as compared to state-of-the-art simulators.

In the remaining part of this document, we assume at least a superficial familiarity 
with quantum computing. Section~\ref{sec:background} describes
key concepts related to quantum-circuit simulation and
the stabilizer formalism. In Section~\ref{sec:engr}, we introduce stabilizer frames 
and describe relevant algorithms. Section~\ref{sec:sframes} describes in detail
our simulation flow implemented in \quipu, and Section~\ref{sec:mthreads} 
discusses a parallel implementation of our technique. In Section~\ref{sec:results},
we describe empirical validation of \quipu and in single- and multi-threaded variants,
as well as comparisons with state-of-the-art simulators. Section~\ref{sec:conclude} 
closes with concluding remarks. 

\section{Background and Previous Work}  
\label{sec:background}

Quantum information processes, including quantum algorithms, are often
modeled using {\em quantum circuits} and, just like conventional digital 
circuits, are represented by diagrams~\cite{NielChu, Viamontes}.
Quantum circuits are sequences of {\em gate operations}
that act on some register of {\em qubits} -- the basic unit of information in a quantum
system. The quantum state $\ket{\psi}$ of a single qubit is described
by a two-dimensional complex-valued vector.
In contrast to classical bits, 
qubits can be in a {\em superposition} of the $0$ and $1$ states. Formally, $\ket{\psi}=\alpha_0\ket{0}+\alpha_1\ket{1}$, where 
$\ket{0} = (1,0)^\top$ and $\ket{1}=(0,1)^\top$ are the two-dimensional {\em computational 
basis states} and $\alpha_i$ are {\em probability amplitudes} that satisfy 
$|\alpha_0|^2 +|\alpha_1|^2 = 1$. An $n$-qubit register is the
tensor product of $n$ single qubits and thus is modeled by a complex vector $\ket{\psi^n}= \ket{\psi_1}\otimes\cdots\otimes\ket{\psi_n}=\sum_{i=0}^{2^n-1}\alpha_i\ket{b_i}$, 
where each $b_i$ is a binary string representing the value $i$
of each basis state. Furthermore, $\ket{\psi^n}$ satisfies $\sum_{i=0}^{2^n-1}|\alpha_i|^2 = 1$.
Each gate operation or {\em quantum gate} is a {\em unitary matrix} that operates on a small 
subset of the qubits in a register. For example, the quantum analogue of a NOT 
gate is the operator 
$X=\left(\begin{smallmatrix}0 & 1 \\ 1 & 0\end{smallmatrix}\right)$, 
\[\alpha_0\ket{00}+\alpha_1\ket{10} \xmapsto{X\otimes I} \alpha_0\ket{10}+\alpha_1\ket{00}\]
Similarly, the two-qubit CNOT operator flips the second qubit 
(target) iff the first qubit (control) is set to $1$, e.g., 
\[\alpha_0\ket{00}+\alpha_1\ket{10} \xmapsto{CNOT} \alpha_0\ket{00}+\alpha_1\ket{11}\]
Another operator of particular importance is the 
Hadamard (H), which is frequently
used to put a qubit in a superposition of computational-basis
states, e.g., 
\[\alpha_0\ket{00}+\alpha_1\ket{10} \xmapsto{I\otimes H} \frac{\alpha_0(\ket{00}+\ket{01}) +\alpha_1(\ket{10}+\ket{11})}{\sqrt{2}}\]
Note that the H gate generates {\em unbiased} superpositions in the sense that
the squares of the absolute value of the amplitudes are equal.

The dynamics involved in observing a quantum state are described by 
non-unitary {\em measurement operators}~\cite[Section 2.2.3]{NielChu}. There are 
different types of quantum measurements, but the type most pertinent to our discussion comprises 
{\em projective measurements in the computational basis}, i.e., measurements with respect
to the $\ket{0}$ or $\ket{1}$ basis states. The corresponding measurement operators
are $P_0=\left(\begin{smallmatrix} 1 & 0 \\ 0 & 0\end{smallmatrix}\right)$ and 
$P_1=\left(\begin{smallmatrix} 0 & 0 \\ 0 & 1\end{smallmatrix}\right)$,
respectively. The probability $p(x)$ of obtaining outcome $x\in\{0,1\}$
on the $j^{th}$ qubit of state $\ket{\psi}$ is given by the inner product 
$\bra{\psi}P_x^j\ket{\psi}$, where $\bra{\psi}$
is the conjugate transpose of $\ket{\psi}$. For example, the probability
of obtaining $\ket{1}$ upon measuring 
$\ket{\psi}=\alpha_0\ket{0}+\alpha_1\ket{1}$ is
\[p(1) = (\alpha_0^*, \alpha_1^*)P_1(\alpha_0, \alpha_1)^\top
= (0, \alpha_1^*)(\alpha_0, \alpha_1)^\top = |\alpha_1|^2\]

\ \\\noindent
{\bf Cofactors of quantum states}. The output states obtained after 
performing computational-basis measurements are called {\em cofactors}, 
and are states of the form $\ket{0}\ket{\psi_0}$ and $\ket{1}\ket{\psi_1}$.
These states are orthogonal to each other and add up
to the original state. The norms of cofactors and the original state
are subject to the Pythagorean theorem.
We denote the $\ket{0}$- and $\ket{1}$-cofactor by $\ket{\psi^{c=0}}$ and $\ket{\psi^{c=1}}$,
respectively, where $c$ is the index of the measured qubit. 
One can also consider {\em iterated cofactors}, such as
{\em double cofactors} $\ket{\psi^{qr=00}}$, $\ket{\psi^{qr=01}}$, $\ket{\psi^{qr=10}}$
and $\ket{\psi^{qr=11}}$. Cofactoring with respect to all qubits produces amplitudes 
of individual basis vectors. Readers familiar with cofactors of Boolean functions 
can use intuition from logic optimization and Boolean function theory.

\subsection{Quantum circuits and simulation} 
\label{sec:qcirc}

To simulate a quantum circuit $\cC$, we first initialize the quantum system to 
some desired state $\ket{\psi}$ (usually a basis state). $\ket{\psi}$ can be represented using
a fixed-size data structure (e.g., an array of $2^n$ complex numbers) or a variable-size 
data structure (e.g., algebraic decision diagram). We then track the 
evolution of $\ket{\psi}$ via its internal representation as the gates in $\cC$ are applied 
one at a time, eventually producing the output state $\cC\ket{\psi}$ \cite{AaronGottes, NielChu, Viamontes}. 
Most quantum-circuit simulators \cite{DeRaedt, Obenland, Oemer, Viamontes}
support some form of the linear-algebraic operations described 
earlier. The drawback of such simulators
is that their runtime and memory requirements grows exponentially in the number of qubits. 
This holds true not only in the worst case but also in practical
applications involving quantum arithmetic and quantum FT circuits. 


Gottesman developed 
a simulation method involving the {\em Heisenberg model}~\cite{Gottes98} 
often used by physicists to describe atomic-scale phenomena. {\em In this model,
one keeps track of the symmetries of an object rather than 
represent the object explicitly}. 
In the context of quantum-circuit simulation,
this model represents quantum states by their symmetries,
rather than complex-valued vectors and amplitudes.
The symmetries are operators 
for which these states are $1$-eigenvectors.
Algebraically, symmetries form {\em group} structures,
which can be specified compactly by group generators \cite{Hunger}. 

    
\subsection{The stabilizer formalism} 
\label{sec:stab}


A unitary operator $U$ {\em stabilizes} a state $\ket{\psi}$ iff
$\ket{\psi}$ is a $1$--eigenvector of $U$, i.e., $U\ket{\psi} 
= \ket{\psi}$. We are interested in operators $U$
derived from the Pauli matrices: $X=\left(\begin{smallmatrix} 0 & 1 \\ 1 & 0\end{smallmatrix}\right), 
Y=\left(\begin{smallmatrix} 0 & -i \\ i & 0\end{smallmatrix}\right), 
Z=\left(\begin{smallmatrix} 1 & 0 \\ 0 & -1\end{smallmatrix}\right)$, and the identity $I=\left(\begin{smallmatrix} 1 & 0 \\ 0 & 1\end{smallmatrix}\right)$.
The one-qubit states stabilized by the Pauli matrices are:
\begin{center}
	\begin{tabular}{lclc}
		$X$ : & $(\ket{0}+\ \ket{1})/\sqrt{2}$ & $-X$ : & $(\ket{0}-\ \ket{1})/\sqrt{2}$ \\
		$Y$ : & $(\ket{0}+i\ket{1})/\sqrt{2}$ & $-Y$ : & $(\ket{0}-i\ket{1})/\sqrt{2}$ \\
		$Z$ : & $\ket{0}$ & $-Z$ : & $\ket{1}$ \\
	\end{tabular}
\end{center}
\begin{table}[!b]
 	\parbox[t]{.8\linewidth}{
    \caption{\label{tab:pauli_mult} Multiplication table for Pauli matrices. Shaded cells
    indicate anticommuting products.}\vspace{-5pt}}
        \centering
        \begin{tabular}{|c||c|c|c|c|}
            \hline
                &   $I$ & $X$                         & $Y$   & $Z$ \\ \hline\hline
            $I$ &   $I$ & $X$                         & $Y$   &  $Z$ \\ \hline
            $X$ &   $X$ & $I$   & \cellcolor[gray]{0.85} $iZ$  & \cellcolor[gray]{0.85} $-iY$ \\ \hline
            $Y$ &   $Y$ & \cellcolor[gray]{0.85} $-iZ$ & $I$   & \cellcolor[gray]{0.85} $iX$ \\ \hline
            $Z$ &   $Z$ & \cellcolor[gray]{0.85} $iY$  & \cellcolor[gray]{0.85} $-iX$ & $I$ \\
            \hline
        \end{tabular}
\end{table}
Observe that $I$ stabilizes all states and $-I$ does not stabilize any state. 
Thus, the entangled state $(\ket{00} + \ket{11})/\sqrt{2}$ is stabilized by 
the Pauli operators $X\otimes X$, $-Y\otimes Y$, $Z\otimes Z$ and $I\otimes I$.  
As shown in Table \ref{tab:pauli_mult}, it turns out that the Pauli matrices 
along with $I$ and the multiplicative factors $\pm1$, $\pm i$, form a 
{\em closed group} under matrix multiplication \cite{NielChu}. Formally, the {\em Pauli group} 
$\mathcal{G}_n$ on $n$ qubits consists of the $n$-fold tensor product 
of Pauli matrices, $P = i^kP_1\otimes\cdot\cdot\cdot\otimes P_n$ 
such that $P_j\in\{I, X, Y, Z\}$ and $k\in\{0,1,2,3\}$. For brevity, the tensor-product symbol 
is often omitted so that $P$ is denoted by a string of $I$, $X$, $Y$ 
and $Z$ characters or {\em Pauli literals}, and a separate integer value
$k$ for the phase $i^k$. This string-integer pair representation allows us to compute 
the product of Pauli operators without explicitly 
computing the tensor products,\footnote{This holds true due to the 
identity: $(A\otimes B)(C \otimes D)=(AC\otimes BD)$.}
e.g., $(-IIXI)(iIYII) = -iIYXI$. Since $\mid \mathcal{G}_n\mid= 4^{n+1}$, 
$\mathcal{G}_n$ can have at most $\log_2 \mid \mathcal{G}_n \mid = 
\log_2 4^{n+1} = 2(n + 1)$ irredundant generators~\cite{NielChu}.
The key idea behind the stabilizer formalism is to represent an 
$n$-qubit quantum state $\ket{\psi}$ by its {\em stabilizer group} 
$S(\ket{\psi})$ -- the subgroup of $\mathcal{G}_n$ that stabilizes $\ket{\psi}$.

	\begin{theorem} \label{th:gen_commute}
        For an $n$-qubit pure state $\ket{\psi}$ and $k\leq n$, $S(\ket{\psi}) \cong {\mathbb Z}_2^k$. 
        If $k=n$, $\ket{\psi}$ is specified uniquely by $S(\ket{\psi})$ and is called a stabilizer state.
	\end{theorem}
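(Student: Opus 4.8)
\noindent\emph{Proof proposal.} The plan is to proceed in three stages: (i) show that $S(\ket{\psi})$ is an abelian group all of whose elements are involutions, so that it is a finite elementary abelian $2$-group and hence $S(\ket{\psi})\cong{\mathbb Z}_2^k$ for some integer $k\geq 0$; (ii) bound $k\leq n$ by computing the dimension of the subspace stabilized by every element of $S(\ket{\psi})$; and (iii) observe that $k=n$ collapses this subspace to a line, which gives the uniqueness claim.

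For (i), take $M,N\in S(\ket{\psi})$. Any two elements of $\mathcal{G}_n$ either commute or anticommute (cf.\ Table~\ref{tab:pauli_mult}); if $M$ and $N$ anticommuted we would get $\ket{\psi}=MN\ket{\psi}=-NM\ket{\psi}=-\ket{\psi}$, forcing $\ket{\psi}=0$, which is impossible for a state, so $S(\ket{\psi})$ is abelian. Moreover, if $M=i^{m}P_1\otimes\cdots\otimes P_n\in S(\ket{\psi})$ then $1$ is an eigenvalue of $M$, but the eigenvalues of such a Pauli string are $\pm i^{m}$, so $m$ is even; hence $M$ is Hermitian, and being also unitary it satisfies $M^2=I$. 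Thus every element of $S(\ket{\psi})$ has order dividing $2$, and since $S(\ket{\psi})\leq\mathcal{G}_n$ is finite it is a finite vector space over ${\mathbb Z}_2$, i.e.\ $S(\ket{\psi})\cong{\mathbb Z}_2^k$ where $2^k=|S(\ket{\psi})|$.

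For (ii) and (iii), first note that $-I\notin S(\ket{\psi})$ since $-I$ stabilizes no nonzero vector; combined with the even-phase restriction from step (i), the only element of $S(\ket{\psi})$ with nonzero trace is $I$ itself, because any other element has at least one non-identity tensor factor and every non-identity Pauli matrix is traceless. Now consider $P_S=2^{-k}\sum_{M\in S(\ket{\psi})}M$. Using closure of $S(\ket{\psi})$ under multiplication together with Hermiticity of its elements, $P_S$ is Hermitian and idempotent, hence an orthogonal projector, and $P_S\ket{\psi}=\ket{\psi}$, so its image contains $\ket{\psi}\neq 0$. Its rank equals its trace, $\mathrm{rank}(P_S)=\mathrm{tr}(P_S)=2^{-k}\,\mathrm{tr}(I)=2^{\,n-k}$, so $2^{\,n-k}\geq 1$, that is, $k\leq n$. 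When $k=n$ this gives $\mathrm{rank}(P_S)=1$, so the simultaneous $+1$-eigenspace of $S(\ket{\psi})$ is exactly the line through $\ket{\psi}$; hence $\ket{\psi}$ is determined, up to an unobservable global phase, by $S(\ket{\psi})$.

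The main obstacle is the rank computation for $P_S$: one must verify that $P_S$ is genuinely a projector --- using that $S(\ket{\psi})$ is closed under multiplication (for idempotency) and that its elements are Hermitian (for self-adjointness, via the even-phase argument of step (i)) --- and that $I$ is the only group element of nonzero trace, which is precisely where the exclusion of $-I$ enters. Everything else reduces to routine bookkeeping with the Pauli multiplication rules.
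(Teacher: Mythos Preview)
Your proof is correct. Step~(i) is essentially identical to the paper's argument, though you make the ``every element is an involution'' step slightly more explicit by first pinning down the phase as even and then invoking Hermitian-plus-unitary.

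The genuine difference is in how you establish $k\leq n$ and the uniqueness at $k=n$. The paper argues informally that each independent generator cuts the admissible subspace in half, so $n$ generators bring the dimension from $2^n$ down to $1$. You instead build the group-averaging projector $P_S=2^{-k}\sum_{M\in S(\ket{\psi})}M$, verify it is an orthogonal projector onto the joint $+1$-eigenspace, and read off $\mathrm{rank}(P_S)=\mathrm{tr}(P_S)=2^{n-k}$ from the fact that $I$ is the only trace-contributing element. This buys you a fully rigorous dimension count in one stroke and makes the $k=n$ uniqueness immediate, whereas the paper's halving argument leaves unproven the claim that an independent new generator always splits the current eigenspace exactly in half (true, but it needs the same kind of trace or anticommutation reasoning you have already packaged into $P_S$). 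The paper's route is lighter and more intuitive for a reader who already believes the halving heuristic; yours is the standard projector argument from the stabilizer-code literature and is self-contained.
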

	\begin{proof}
            (\emph{i}) To prove that $S(\ket{\psi})$ is commutative,
            let $P, Q \in S(\ket{\psi})$ such that $PQ\ket{\psi} = \ket{\psi}$. If $P$ and $Q$ anticommute,
            -$QP\ket{\psi} =$ -$Q(P\ket{\psi}) =$ -$Q\ket{\psi} =$ -$\ket{\psi} \neq \ket{\psi}$.
            Thus, $P$ and $Q$ cannot both be elements of $S(\ket{\psi})$.
			
			\noindent
            ({\em ii}) To prove that every element of $S(\ket{\psi})$ is of
            degree $2$, let $P \in S(\ket{\psi})$ such that $P\ket{\psi} = \ket{\psi}$.
            Observe that $P^2 = i^lI$ for $l\in\{0,1,2,3\}$.
            Since $P^2\ket{\psi} = P(P\ket{\psi}) = P\ket{\psi} = \ket{\psi}$, we obtain
            $i^l = 1$ and $P^2 = I$.
			
			\noindent
            ({\em iii}) {\color{blue} From group theory, a finite Abelian group 
            with identity element $e$ such that $a^2 = e$ for every 
            element $a$ in the group must be $\cong {\mathbb Z}_2^k$. }
			
			\noindent
            ({\em iv}) We now prove that $k \leq n$.
            First note that each independent generator $P \in S(\ket{\psi})$
            imposes the linear constraint $P\ket{\psi}=\ket{\psi}$
            on the $2^n$-dimensional vector space.
            The subspace of vectors that satisfy such a constraint
            has dimension $2^{n-1}$, or half the space. Let $gen(\ket{\psi})$
            be the set of generators for $S(\ket{\psi})$.
            We add independent generators to $gen(\ket{\psi})$ one by one and impose
  			their linear constraints, to limit $\ket{\psi}$ to the shared
			$1$-eigenvector. Thus the size of $gen(\ket{\psi})$ is at most $n$.
			In the case $|gen(\ket{\psi})| = n$, the $n$ independent 
			generators reduce the subspace of possible states to dimension
        		one. Thus, $\ket{\psi}$ is uniquely specified.
	\end{proof}

The proof of Theorem~\ref{th:gen_commute} shows that $S(\ket{\psi})$ 
is specified by only $\log_2 2^{n} = n$ {\em irredundant stabilizer generators}. 
Therefore, an arbitrary $n$-qubit stabilizer state can be represented by
a {\em stabilizer matrix} $\cM$ whose rows represent a set of 
generators $Q_1,\ldots,Q_n$ for $S(\ket{\psi})$. (Hence we use the terms 
{\em generator set} and {\em stabilizer matrix} interchangeably.) Since each 
$Q_i$ is a string of $n$ Pauli literals, the size of the matrix is $n\times n$. 
The fact that $Q_i\in S(\ket{\psi})$ implies that the leading phase of $Q_i$
can only be~$\pm 1$ and not~$\pm i$.\footnote{Suppose the phase
of $Q_i$ is $\pm i$, then $Q_i^2=$-$I \in S(\ket{\psi})$ which is not
possible since -$I$ does not stabilize any state.}
Therefore, we store the phases of each $Q_i$ separately using a binary 
vector of size $n$.

The storage cost for $\cM$ is $\Theta(n^2)$, which is an {\em exponential 
improvement} over the $O(2^n)$ cost often encountered in 
vector-based representations. 

\begin{example}\label{ex:stbmtx}
The state \begin{small}$\ket{\psi} = (\ket{00} + \ket{11})/\sqrt{2}$\end{small} 
is uniquely specified by any of the following matrices: 
\begin{small}
$\cM_1=\begin{smallmatrix}+\\+\end{smallmatrix}\left[\begin{smallmatrix} XX \\ ZZ \end{smallmatrix}\right]$,
\end{small}
\begin{small}
 $\cM_2=\begin{smallmatrix}+\\-\end{smallmatrix}\left[\begin{smallmatrix} XX \\ YY \end{smallmatrix}\right]$,
\end{small}
\begin{small}
 $\cM_3=\begin{smallmatrix}-\\+\end{smallmatrix}\left[\begin{smallmatrix} YY \\ ZZ \end{smallmatrix}\right]$.
\end{small}
One obtains $\cM_2$ from $\cM_1$ by left-multiplying the second row
by the first. Similarly, $\cM_3$ is obtained from $\cM_1$ 
or $\cM_2$ via row multiplication. 
Observe that multiplying any row by itself yields
$II$, which stabilizes $\ket{\psi}$. However, $II$ cannot
be used as a generator because it is redundant
and carries no information about the structure of $\ket{\psi}$. 
This holds true in general for $\cM$ of any size.
\end{example}
 	
Theorem~\ref{th:gen_commute} suggests that Pauli literals can be 
represented using two bits, e.g., $00 = I$, $01 = Z$, $10 = X$ and $11 = Y$. 
Therefore, a stabilizer matrix can be encoded using an $n\times2n$ binary matrix or {\em tableau}.
{\color{blue} This approach induces a linear map
${\mathbb Z}_2^{2n}\mapsto\mathcal{G}_n$ because vector addition
in ${\mathbb Z}_2^{2}$ is equivalent to multiplication of Pauli operators
up to a global phase.} The tableau implementation of the stabilizer formalism
is covered in \cite{AaronGottes, NielChu}. 
	\begin{observation} \label{obs:stabst_amps}
    		Consider a stabilizer state $\ket{\psi}$ represented by a set of generators of its
    		stabilizer group $S(\ket{\psi})$. Recall from the proof of Theorem~\ref{th:gen_commute}
    		that, since $S(\ket{\psi}) \cong {\mathbb Z}_2^n$,
   	 	each generator imposes a linear constraint on $\ket{\psi}$. Therefore, the set
        of generators can be viewed as a system of linear equations whose solution
        yields the $2^k$ (for some $k$ between $0$ and $n$) non-zero computational-basis 
        amplitudes that make up $\ket{\psi}$. Thus,
        one needs to perform Gaussian elimination to obtain such basis
        amplitudes from a generator set. 
    \end{observation}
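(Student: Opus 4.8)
The plan is to turn the generator set into an explicit linear-algebraic object acting on the amplitude vector, and then read off both the size of the support and the procedure that recovers the amplitudes. First I would use the two-bit encoding of Theorem~\ref{th:gen_commute} to write each generator as $Q_i = s_i\, i^{\epsilon_i}\, X^{x_i}Z^{z_i}$, with $s_i\in\{\pm 1\}$ its leading sign (which must be $\pm 1$ for $Q_i$ to stabilize anything, cf.\ Theorem~\ref{th:gen_commute}), $x_i,z_i\in{\mathbb Z}_2^n$ recording the $X$- and $Z$-parts, $X^{x}$ denoting $X$ applied to the qubits selected by $x$, and $\epsilon_i = x_i\cdot z_i$ the $i$-phase contributed by the $Y$ literals (Table~\ref{tab:pauli_mult}). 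A one-line computation then gives $Q_i\ket{b} = \gamma_i(b)\,\ket{b\oplus x_i}$, where $\gamma_i(b) = s_i\, i^{\epsilon_i}\,(-1)^{z_i\cdot b}$. Writing $\ket{\psi} = \sum_b \alpha_b\ket{b}$ and comparing the coefficient of $\ket{b}$ on both sides of $Q_i\ket{\psi} = \ket{\psi}$ yields the defining relations $\alpha_b = \gamma_i(b\oplus x_i)\,\alpha_{b\oplus x_i}$ for every $b$ and every $i$.

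Next I would run Gaussian elimination on the binary tableau of $\cM$ --- a sequence of row operations, i.e.\ generator products, that leaves $S(\ket{\psi})$ unchanged --- to bring it to a form with a maximal set of $X$-free (``diagonal'') generators. If $k$ denotes the ${\mathbb Z}_2$-rank of $\{x_1,\dots,x_n\}$, then exactly $n-k$ generators survive with $x_i = 0$; for such a generator $\epsilon_i = x_i\cdot z_i = 0$, so the relation above collapses to $\alpha_b = s_i\,(-1)^{z_i\cdot b}\,\alpha_b$, forcing $z_i\cdot b = c_i$ for every $b$ in the support, where $c_i\in\{0,1\}$ is the sign bit of $Q_i$. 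Hence the support $\{b:\alpha_b\neq 0\}$ is precisely the solution set of the linear system $\{z_i\cdot b = c_i : x_i = 0\}$; its $n-k$ equations are independent, so the solution set is an affine subspace of ${\mathbb Z}_2^n$ of dimension $k$, giving $2^k$ nonzero amplitudes with $0\le k\le n$. The remaining $k$ generators, whose $x_i$ are independent, act on this affine subspace by the translations $b\mapsto b\oplus x_i$ and, through $\alpha_b = \gamma_i(b\oplus x_i)\,\alpha_{b\oplus x_i}$, propagate a single reference amplitude to all of the others; every nonzero amplitude therefore has the same magnitude ($2^{-k/2}$ after normalization) and a relative phase in $\{\pm 1,\pm i\}$ determined by the $\gamma_i$'s. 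Extracting the amplitudes from $\cM$ is thus exactly: run Gaussian elimination to expose the diagonal block, solve its linear system to enumerate the support, and back-substitute the phase relations --- which is the claim.

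The main obstacle is phase bookkeeping rather than anything conceptual. One must check that composing generators during elimination never produces a leading phase $\pm i$ (ruled out by $Q_i^2 = I$ in Theorem~\ref{th:gen_commute}) and, more delicately, that the support computed from one generating set agrees with that computed from an equivalent one, i.e.\ that the equations $\{z_i\cdot b = c_i\}$ extracted after elimination define the same affine subspace regardless of the chosen generators. A related point worth stating explicitly is consistency: the system is solvable because a nonzero $\ket{\psi}$ exists, and commutativity of $S(\ket{\psi})$ (Theorem~\ref{th:gen_commute}) is precisely what prevents two diagonal generators from imposing contradictory constraints $z\cdot b = 0$ and $z\cdot b = 1$ on the same vectors.
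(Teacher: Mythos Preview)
Your argument is correct and, in fact, considerably more detailed than the paper's own treatment. In the paper this statement is an \emph{observation}, not a theorem with a separate proof: the justification consists only of the remark that, by the proof of Theorem~\ref{th:gen_commute}, each generator imposes a linear constraint on $\ket{\psi}$, so the generator set may be regarded as a linear system solved by Gaussian elimination. No explicit tableau computation, amplitude recurrence, or count of the support size is carried out.

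What you add is exactly the content the paper leaves implicit. Your decomposition $Q_i = s_i\, i^{\epsilon_i} X^{x_i} Z^{z_i}$, the recurrence $\alpha_b = \gamma_i(b\oplus x_i)\,\alpha_{b\oplus x_i}$, and the splitting of the row-reduced generators into a $Z$-block (whose equations cut out the affine support of size $2^k$) and an $X$-block (whose translations propagate a single reference amplitude across that support) together give a rigorous derivation of the $2^k$ claim and of why Gaussian elimination is the right tool. The one step you might make explicit is that the $k$ independent $x_i$'s actually span the direction of the affine support; this follows from commutativity of the generators (for a $Z$-type generator $Z^{z_j}$ and any other generator, $x_i\cdot z_j \equiv 0 \pmod 2$), which you already invoke at the end. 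Otherwise nothing is missing.
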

\ \\\noindent
{\bf Canonical stabilizer matrices}. Observe from Example~\ref{ex:stbmtx} that,
although stabilizer states are uniquely determined by their stabilizer group, 
the set of generators may be selected in different ways. Any stabilizer matrix 
can be rearranged by applying sequences of elementary row operations in order 
to obtain the {\em row-reduced echelon form} structure depicted in Figure~\ref{fig:sminv}b. 
This particular stabilizer-matrix structure defines a {\em canonical representation} for 
stabilizer states~\cite{Djor, Gottes98}.
The elementary row operations that can be performed on a stabilizer matrix are
transposition, which swaps two rows of the matrix, and multiplication,
which left-multiplies one row with another. Such operations do not modify 
the stabilizer state and resemble the steps performed during a
Gaussian elimination\footnote{\scriptsize Since Gaussian elimination
essentially inverts the $n\times 2n$ matrix, this
could be sped up to $O(n^{2.376})$ time by using
fast matrix inversion algorithms. However, $O(n^3)$-
time Gaussian elimination seems more practical.} procedure.
Several row-echelon (standard) forms for stabilizer generators along 
with relevant algorithms to obtain them have been introduced in 
the literature~\cite{Audenaert, Gottes98, NielChu}. 
	
\ \\\noindent
{\bf Stabilizer-circuit simulation}. The computational-basis states 
are stabilizer states that can be represented by the 
stabilizer-matrix structure depicted in Figure~\ref{fig:sminv}a.
In this matrix form, the $\pm$ sign of each row along with 
its corresponding $Z_j$-literal designates whether 
the state of the $j^{th}$ qubit is $\ket{0}$ ($+$) or $\ket{1}$ ($-$). Suppose we 
want to simulate circuit $\cC$. Stabilizer-based
simulation first initializes $\cM$ to specify some basis 
state. Then, to simulate the action of each gate $U \in \cC$,
we conjugate each row $Q_i$ of $\cM$ by $U$.\footnote{
Since $Q_i\ket{\psi} = \ket{\psi}$, 
the resulting state $U\ket{\psi}$ is stabilized by $UQ_iU^\dag$ 
because $(UQ_iU^\dag) U\ket{\psi} = UQ_i\ket{\psi} = U\ket{\psi}$.} 
We require that the conjugation $UQ_iU^\dag$ maps to another
string of Pauli literals so that the resulting
matrix $\cM'$ is well-formed. It turns out that the H, P and CNOT gates have such mappings, 
i.e., these gates conjugate the Pauli group onto itself~\cite{Gottes98, NielChu}.
Table~\ref{tab:cliff_mult} lists mappings for the H, P and CNOT
gates.

\begin{figure}[!t]
	\centering
	\begin{tabular}{lclc}
		\hspace{-6pt}\rotatebox{90}{\hspace{10mm}\rotatebox{-90}{\bf (a)}} &
		\hspace{-12pt}\includegraphics[scale=.30]{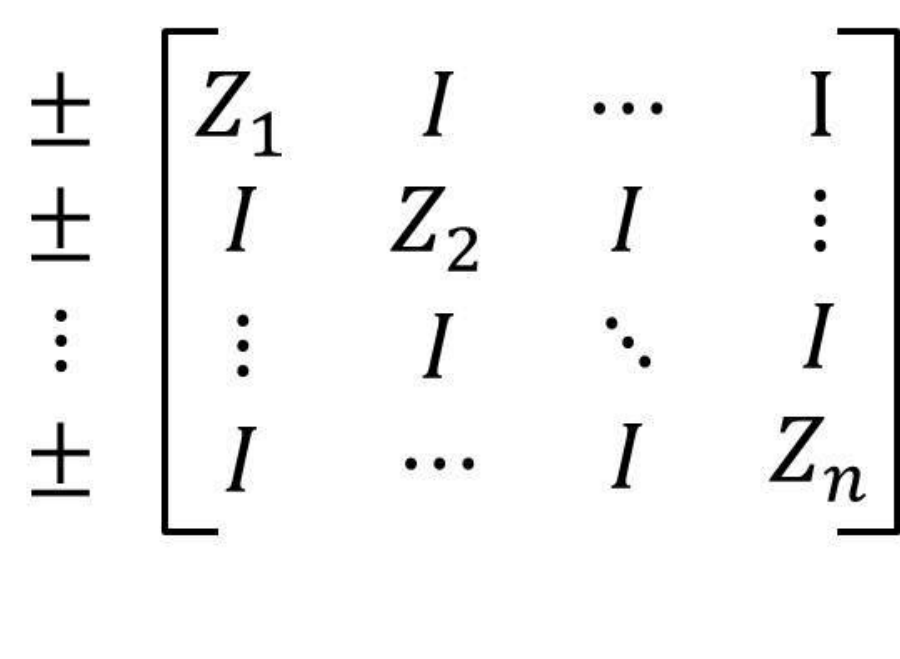}
		&
		\rotatebox{90}{\hspace{10mm}\rotatebox{-90}{\bf (b)}}&
		\hspace{-12pt} 
		\includegraphics[scale=.25]{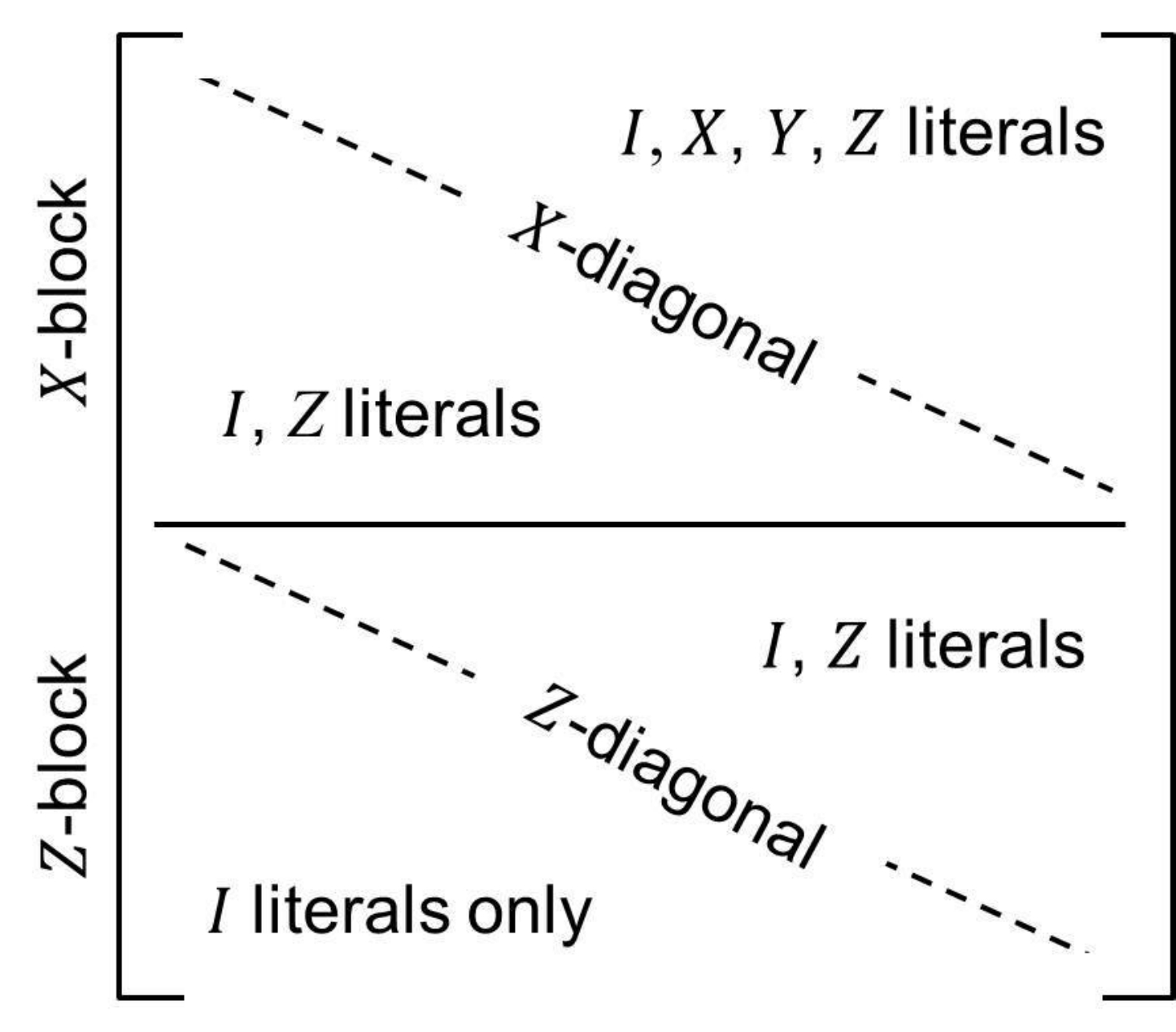} \\
	\end{tabular}
	\vspace{-5pt}
	\caption{\label{fig:sminv} {\bf (a)}~Stabilizer-matrix structure 
	for basis states. {\bf (b)}~Row-echelon form for stabilizer matrices. 
	The $X$-block contains a {\em minimal} set of generators with $X/Y$ literals. 
	Generators with $Z/I$ literals only appear in the $Z$-block.}
	\vspace{-10pt}
\end{figure}

	\begin{table}[!b]
        \centering 
        \parbox{.9\columnwidth}{\caption{\label{tab:cliff_mult} Conjugation of Pauli-group elements
    		by {\color{blue} Clifford} gates~\cite{NielChu}. For the $CNOT$ case, subscript $1$
    		indicates the control and $2$ the target.}\vspace{-8pt}}
        \begin{tabular}{cc}
        \begin{tabular}{|c||c|c|}
            \hline
                \sc Gate & \sc Input  & \sc Output \\ \hline\hline
                          & $X$ & $Z$ \\
                $H$       & $Y$ & -$Y$ \\
                          & $Z$ & $X$  \\ \hline
                          & $X$ & $Y$ \\
                $P$       & $Y$ & -$X$ \\
                          & $Z$ & $Z$   \\
            \hline
        \end{tabular}
        & \hspace{-5pt}
         \begin{tabular}{|c||c|c|}
            \hline
            \sc Gate & \sc Input  & \sc Output  \\ \hline\hline
            \multirow{6}{10mm}{$CNOT$} & $I_1X_2$ & $I_1X_2$   \\
                      & $X_1I_2$ & $X_1X_2$    \\
                      & $I_1Y_2$ & $Z_1Y_2$      \\
                      & $Y_1I_2$ & $Y_1X_2$     \\
                      & $I_1Z_2$ & $Z_1Z_2$   \\
                      & $Z_1I_2$ & $Z_1I_2$    \\
            \hline
        \end{tabular}
		\end{tabular}
 	\end{table}

\begin{example} Suppose we simulate a CNOT gate
on \begin{small}$\ket{\psi} = (\ket{00} + \ket{11})/\sqrt{2}$\end{small}. 
Using the stabilizer representation,
\begin{small}
$\cM_\psi=\left[\begin{smallmatrix} +XX \\ +ZZ \end{smallmatrix}\right]\xmapsto{CNOT}\cM'_\psi=\left[\begin{smallmatrix} +XI \\ +IZ \end{smallmatrix}\right]$.
\end{small}
%
The rows of $\cM'_\psi$ stabilize
\begin{small}$\ket{\psi}\xmapsto{CNOT}(\ket{00} + \ket{10})/\sqrt{2}$\end{small} as required.
\end{example}

Since H, P and CNOT gates are directly simulated
via the stabilizer formalism, these gates are {\color{blue} also known as
\emph{stabilizer gates}} and any circuit composed 
exclusively of such gates is called a unitary \emph{stabilizer
circuit}. Table~\ref{tab:cliff_mult} shows that at most two 
columns of $\cM$ are updated when a {\color{blue} Clifford (stabilizer)} gate is simulated.
Therefore, such gates are simulated in $\Theta(n)$ time. Furthermore, 
for any pair of Pauli operators $P$ and $Q$, $PQP^\dag=(-1)^cQ$, 
where $c=0$ if $P$ and $Q$ commute, and $c=1$ otherwise. Thus, 
Pauli gates can also be simulated in linear time as they
only permute the phase vector of the stabilizer matrix. 

	\begin{theorem} \label{th:stabst}
        An $n$-qubit stabilizer state $\ket{\psi}$ can be obtained
        by applying a stabilizer circuit to the $\ket{0}^{\otimes n}$
        computational-basis state. 
    \end{theorem}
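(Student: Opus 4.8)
My plan is to prove the equivalent statement that every $n$-qubit stabilizer state $\ket{\psi}$ is carried to $\ket{0}^{\otimes n}$ by some unitary stabilizer circuit $\cC$, and then to output $\cC^{-1}$. This suffices because unitary stabilizer circuits are closed under inversion: $H^{-1}=H$, $CNOT^{-1}=CNOT$, $P^{-1}=P^{3}$, and a $SWAP$ is three $CNOT$s, so $\cC^{-1}$ is again a unitary stabilizer circuit with $\cC^{-1}\ket{0}^{\otimes n}=\ket{\psi}$ up to the unobservable global phase. By Theorem~\ref{th:gen_commute}, two $n$-qubit states with the same stabilizer group are equal, so it is enough to exhibit a Clifford circuit whose conjugation action sends some generating set $\{Q_{1},\dots,Q_{n}\}$ of $S(\ket{\psi})$ to $\{Z_{1},\dots,Z_{n}\}$, the generators that uniquely specify $\ket{0}^{\otimes n}$ (Figure~\ref{fig:sminv}a).

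I would prove this last claim by induction on $n$. For the inductive step, I first normalize $Q_{1}$: since $Q_{1}\neq\pm I$, there is a short stabilizer circuit taking $Q_{1}$ to $Z_{1}$ (up to sign) --- using the single-qubit conjugations of Table~\ref{tab:cliff_mult} to turn each literal of $Q_{1}$ into $X$ or $I$, then $CNOT$s controlled on the first non-trivial qubit to collapse the operator to a single $X$, a $SWAP$ to move it onto qubit $1$, and a final $H$ to produce $\pm Z_{1}$. Conjugating the whole generating set by this circuit sends $Q_{1}\mapsto Z_{1}$ (folding the sign into an extra $X$ on qubit $1$ when needed), while each $Q_{i}$ with $i\geq 2$ goes to some $Q_{i}'$ that, commuting with $Z_{1}$, has an $I$ or $Z$ literal on qubit $1$; left-multiplying $Q_{i}'$ by $Z_{1}$ when that literal is $Z$ --- a row operation changing neither the stabilizer group nor the state --- we may assume every $Q_{i}'$ with $i\geq 2$ acts as $I$ on qubit $1$. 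Then $Q_{2}',\dots,Q_{n}'$ are $n-1$ commuting, independent Paulis on qubits $2,\dots,n$ generating a group without $-I$, so the induction hypothesis supplies a stabilizer circuit on those qubits sending them to $Z_{2},\dots,Z_{n}$; supported away from qubit $1$, that circuit fixes $Z_{1}$. Composing the two circuits completes the step, and the base case $n=1$ is a finite check.

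The step I expect to be the main nuisance is phase bookkeeping. The conjugation rules of Table~\ref{tab:cliff_mult} and the single-qubit normalization can flip a generator's leading $\pm$ sign, and when two generators are multiplied the per-qubit products in principle carry $\pm i$ factors; one must verify that these always cancel (they do, since a stabilizer group contains no element of leading phase $\pm i$, as noted after Theorem~\ref{th:gen_commute}) and that the sign of a non-identity Pauli can be cleared on the way to $Z_{1}$ by an extra $X=HP^{2}H$ on qubit $1$. The remaining obligations are routine: Clifford conjugations and row operations preserve both independence and commutativity of the generators, so every stage of the recursion operates on a legitimate stabilizer matrix, and the recursion bottoms out after exactly $n$ rounds.
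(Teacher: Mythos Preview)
Your argument is correct but takes a different route from the paper. The paper gives no self-contained proof here: it defers entirely to Theorem~8 of Aaronson--Gottesman~\cite{AaronGottes}, which constructs a canonical stabilizer circuit directly from the binary tableau of $S(\ket{\psi})$ via a fixed layered decomposition into rounds of specific gate types. Your approach is instead the inductive ``normalize one generator at a time'' argument, and you in fact establish Corollary~\ref{cor:stab_allzeros} first and recover Theorem~\ref{th:stabst} by inverting the resulting circuit --- the reverse of the paper's logical order, which deduces Corollary~\ref{cor:stab_allzeros} \emph{from} Theorem~\ref{th:stabst}. What your version buys is a proof that is elementary and needs no external citation; what the Aaronson--Gottesman construction buys is an explicit canonical form with concrete gate-count bounds, which the paper later exploits algorithmically (for instance, in the extended coalescing procedure of Section~\ref{sec:sframes}).
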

    \begin{proof}
    		The work in \cite{AaronGottes} represents the generators 
    		using a tableau, and then shows how to construct a {\em canonical}
    		stabilizer circuit $\cC$ from the tableau. We refer the reader 
    		to~\cite[Theorem 8]{AaronGottes} for details of the proof.
    		Algorithms for obtaining more compact canonical circuits are
    		discussed in~\cite{Garcia}.
    \end{proof}

    \begin{corollary} \label{cor:stab_allzeros}
	   	An $n$-qubit stabilizer state $\ket{\psi}$ can be transformed
  		by {\color{blue} Clifford} gates into the $\ket{00\ldots 0}$ computational-basis state.
   	\end{corollary}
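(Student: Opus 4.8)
The plan is to obtain this immediately from Theorem~\ref{th:stabst} by running the stabilizer circuit it produces backwards. Theorem~\ref{th:stabst} gives a stabilizer circuit $\cC$ with $\cC\ket{0}^{\otimes n} = \ket{\psi}$. Every gate of $\cC$ is unitary, hence $\cC$ is invertible and $\cC^{-1}\ket{\psi} = \ket{0}^{\otimes n}$, so it suffices to show that $\cC^{-1}$ is itself a sequence of Clifford gates.

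To see this, I would reverse the gate sequence of $\cC$ and replace each gate by its inverse, using the fact that each stabilizer gate has an inverse expressible with stabilizer gates: $H^{-1}=H$ and $\mathrm{CNOT}^{-1}=\mathrm{CNOT}$ (both are involutions), while $P=\mathrm{diag}(1,i)$ satisfies $P^{4}=I$, so $P^{-1}=P^{3}=PPP$. Thus $\cC^{-1}$ is a finite sequence of $H$, $P$ and $\mathrm{CNOT}$ gates, and applying it to $\ket{\psi}$ drives the state to $\ket{00\ldots 0}$.

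The only point that needs checking is that the inverse of a stabilizer circuit remains within the stabilizer-gate set — i.e., closure of $\{H,P,\mathrm{CNOT}\}$ under inversion — and this is settled by the identity $P^{-1}=P^{3}$ noted above. I would also remark that a fully constructive alternative is to manipulate $\cM_\psi$ directly: first put it into the row-echelon canonical form of Figure~\ref{fig:sminv}b, then apply $H$ gates to turn the $X$-block into $Z$-literals, $\mathrm{CNOT}$ gates to reduce the $Z$-block to a diagonal form, and $P$/Pauli gates to correct the phases, arriving at the basis-state structure of Figure~\ref{fig:sminv}a; but since this is merely an explicit unrolling of Theorem~\ref{th:stabst} together with its inverse, I would present the short argument above rather than grind through the row operations.
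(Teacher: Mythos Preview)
Your proposal is correct and matches the paper's proof essentially verbatim: invoke Theorem~\ref{th:stabst} to obtain $\cC$ with $\cC\ket{0}^{\otimes n}=\ket{\psi}$, then reverse the gate order and replace each $P$ by $PPP$ (with $H$ and $\mathrm{CNOT}$ self-inverse) to obtain a stabilizer circuit mapping $\ket{\psi}$ back to $\ket{0}^{\otimes n}$. The additional constructive remark about manipulating $\cM_\psi$ directly is a nice aside but, as you note, unnecessary for the corollary itself.
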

   	\begin{proof} 
        Since every stabilizer state can be produced by applying some unitary
        stabilizer circuit $\cC$ to the $\ket{0}^{\otimes n}$ state, it suffices to reverse
        $\cC$ to perform the inverse transformation. To reverse a stabilizer
        circuit, reverse the order of gates and replace every $P$ gate with $PPP$.
    \end{proof}
    
The stabilizer formalism also admits measurements 
in the computational basis \cite{Gottes98}. Conveniently, the formalism avoids
the direct computation of measurement operators and inner products (Section \ref{sec:background}).
However, the updates to $\cM$ for such gates are not as efficient
as for {\color{blue} Clifford} gates. Note that any qubit
$j$ in a stabilizer state is either in a $\ket{0}$ ($\ket{1}$) 
state or in an unbiased superposition of both.
The former case is called a {\em deterministic
outcome} and the latter a {\em random outcome}. We can tell these 
cases apart in $\Theta(n)$ time by searching for $X$ or $Y$ literals in
the $j^{th}$ column of $\cM$. If such literals are found, the qubit must be in 
a superposition and the outcome is random with equal probability
($p(0) = p(1) = .5$); otherwise the outcome is deterministic
($p(0) = 1$ or $p(1) = 1$). 

	{\em Case 1~--~randomized outcomes}: one flips an unbiased coin to 
	decide the outcome $x\in\{0,1\}$ and then updates $\cM$ to make 
	it consistent with the outcome. Let $R_j$ be a row in 
	$\cM$ with an $X/Y$ literal in its $j^{th}$ position, and let 
	$Z_j$ be the Pauli operator with a $Z$ literal in its $j^{th}$ 
	position and $I$ everywhere else. The phase of $Z_j$ is set to
	$+1$ if $x=0$ and $-1$ if $x=1$. Observe that $R_j$ and $Z_j$ 
	anticommute. If any other rows in $\cM$ anticommute with $Z_j$, 
	multiply them by $R_j$ to make them commute with $Z_j$. Then,
	replace $R_j$ with $Z_j$. Since this process requires up to
	$n$ row multiplications, the overall runtime is $O(n^2)$. 
	
	{\em Case 2~--~deterministic outcomes}: no updates to 
	$\cM$ are necessary but we need to figure out whether the qubit is 
	in the $\ket{0}$ or $\ket{1}$ state, i.e., whether the qubit is 
	stabilized by $Z$ or -$Z$. One approach is to perform Gaussian 
	elimination (GE) to put $\cM$ in row-echelon form. 
	This removes redundant literals from $\cM$ and makes it
	possible to identify the row containing a $Z$ in its $j^{th}$ position
	and $I$'s everywhere else. The $\pm$ phase of such a row 
	decides the outcome of the measurement. Since this is a GE-based
	approach, it takes $O(n^3)$ time in practice. 

The work in \cite{AaronGottes} improved the runtime of deterministic
measurements by doubling the size of $\cM$ to include $n$ {\em destabilizer generators}
in addition to the $n$ stabilizer generators. 
Such destabilizer generators help 
identify which specific row multiplications to compute in order 
to decide the measurement outcome. This approach avoids GE
and thus deterministic measurements are computed in $O(n^2)$ time. 
In Section~\ref{sec:engr}, we describe a different approach that computes 
such measurements in linear time without extra storage
but with an increase in runtime when simulating {\color{blue} Clifford} gates. 

In quantum mechanics, the states $e^{i\theta}\ket{\psi}$ and $\ket{\psi}$ are 
considered phase-equivalent because $e^{i\theta}$ does not affect the statistics 
of measurement. Since the stabilizer formalism simulates {\color{blue} Clifford} gates 
via their action-by-conjugation, such global phases are not maintained. 

\begin{example} Suppose we have state $\ket{1}$, which is stabilized by -$Z$.
Conjugating the stabilizer by the Phase gate yields $P($-$Z)P^\dag$=-$Z$.
However, in the state-vector representation, $\ket{1}\xmapsto{P}i\ket{1}$.
Thus the global phase $i$ is not maintained by the stabilizer. 
\end{example}

Since global phases are unobservable, they do not need to be maintained
when simulating a single stabilizer state. However, in 
Section~\ref{sec:sframes}, we show that such phases 
must be maintained when dealing with stabilizer-state
superpositions, where global phases become relative.


\section{Stabilizer Frames: Data Structure \\ and Algorithms}  
\label{sec:engr}

The {\color{blue} Clifford} gates by themselves do not form a universal set 
for quantum computation~\cite{AaronGottes, NielChu}. 
However, the Hadamard and Toffoli ($TOF$) gates do~\cite{Aharonov}.
To simulate $TOF$ and other {\em {\color{blue} non-Clifford} gates}, we extend
the formalism to include the representation of arbitrary 
quantum states as {\em superpositions of stabilizer states}.

\begin{example}\label{ex:superpos}
Recall from Section~\ref{sec:stab} that 
computational-basis states are stabilizer states. Thus, any one-qubit state 
$\ket{\psi} = \alpha_0\ket{0} + \alpha_1\ket{1}$ is 
a superposition of the stabilizer states $\ket{0}$ and $\ket{1}$. 
In general, any state decomposition in a computational basis is a 
stabilizer superposition.
\end{example}

{\color{blue}

Suppose $\ket{\psi}$ in Example~\ref{ex:superpos} is an {\em unbiased} state
such that $\alpha_0=i^k\alpha_1$ where $k = \{0,1,2,3\}$.
Then $\ket{\psi}$ can be represented using a single stabilizer state instead of two (up to a global phase).
} 
The key idea behind our technique is to identify and compress large unbiased 
superpositions on the fly during simulation to reduce resource requirements.
To this end, we leverage the following observation and derive a compact
data structure for representing stabilizer-state superpositions.

	\begin{observation}\label{obs:stbbasis}
		Given an $n$-qubit stabilizer state $\ket{\psi}$, there exists an
		orthonormal basis including $\ket{\psi}$ and consisting entirely of 
		stabilizer states. One such basis is obtained directly from the stabilizer
		of $\ket{\psi}$ by changing the signs of an arbitrary, non-empty subset of generators
		of $S(\ket{\psi})$, i.e., by modifying the phase vector of the
		stabilizer matrix for $\ket{\psi}$.\footnote{Let $S(\ket{\psi})$ 
		and $S(\ket{\varphi})$ be the stabilizer
		groups for $\ket{\psi}$ and $\ket{\varphi}$, respectively.
		If there exist $P \in S(\ket{\psi})$ and $Q \in S(\ket{\varphi})$
		such that $P =$ -$Q$, $\ket{\psi}$ and $\ket{\varphi}$
		are orthogonal since $\ket{\psi}$ is a $1$-eigenvector of $P$ and
		$\ket{\varphi}$ is a $(-1)$-eigenvector of $P$.}
		Thus, one can produce $2^n-1$ additional 
		orthogonal stabilizer states. Such states, together with 
		$\ket{\psi}$, form an orthonormal basis. This basis is 
		unambiguously specified by a single stabilizer state, 
		and any one basis state specifies the same basis.
	\end{observation}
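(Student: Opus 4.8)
The plan is to make the construction in the statement explicit and verify three things: that each signed generator set yields a well-defined stabilizer state, that the resulting $2^n$ states are pairwise orthogonal (so by a dimension count they form an orthonormal basis), and that this basis is independent of the chosen generating set and recoverable from any single member.

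First I would fix irredundant generators $Q_1,\dots,Q_n$ of $S(\ket{\psi})$ and, for each sign vector $s=(s_1,\dots,s_n)\in\mathbb{Z}_2^n$, consider the matrix with rows $(-1)^{s_i}Q_i$. I would check this is again a legal stabilizer matrix: negating rows affects neither commutativity, nor independence, nor the property $P^2=I$, and the generated group still omits $-I$ — a product of the signed generators equals $\pm$ a product of the $Q_i$, and by independence the latter is $I$ only for the trivial product (giving $I$, not $-I$), while it is never $-I$ since $-I\notin S(\ket{\psi})$. By Theorem~\ref{th:gen_commute} this matrix specifies a unique stabilizer state $\ket{\psi_s}$, with $\ket{\psi_0}=\ket{\psi}$.

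Next, for $s\neq t$ I would choose a coordinate $i$ on which they differ: then $P:=(-1)^{s_i}Q_i$ stabilizes $\ket{\psi_s}$ while $-P$ stabilizes $\ket{\psi_t}$, and since $P$ is Hermitian, $\braket{\psi_s}{\psi_t}=\bra{\psi_s}P\ket{\psi_t}=-\braket{\psi_s}{\psi_t}$, forcing $\braket{\psi_s}{\psi_t}=0$ — precisely the orthogonality argument in the footnote. This yields $2^n$ mutually orthogonal unit vectors in a $2^n$-dimensional space, hence an orthonormal basis; the $2^n-1$ cases with $s\neq 0$ are the ``additional'' states obtained by flipping a non-empty subset of signs.

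Finally, for the uniqueness claim I would pass from $S(\ket{\psi_s})$ to the unsigned group $\overline S=\{\pm I\}\cdot S(\ket{\psi_s})\le\mathcal{G}_n$, of order $2^{n+1}$. Flipping signs of generators does not change $\overline S$, so $\overline S$ is common to every member of the basis and is determined by any one of them; the basis then consists exactly of the stabilizer states whose stabilizer is an order-$2^n$ subgroup of $\overline S\cong\mathbb{Z}_2^{n+1}$ avoiding $-I$, i.e.\ a complement of the $\langle -I\rangle$ factor, of which there are exactly $2^n$ — one per sign vector. I expect the only delicate point to be this last step: verifying that sign vectors correspond bijectively to such complementary subgroups, so that ``changing signs of a non-empty subset of generators'' enumerates each of the other $2^n-1$ basis states exactly once. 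Everything else is a direct application of Theorem~\ref{th:gen_commute} together with the orthogonality footnote.
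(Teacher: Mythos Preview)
Your proposal is correct and its core coincides with the paper's: the orthogonality step via $\pm 1$-eigenvectors of a shared Pauli operator is exactly the content of the paper's footnote, which is the only justification the paper offers for this Observation. You are in fact more thorough than the paper --- it gives no argument for well-definedness of the signed generator sets or for the final uniqueness claim, whereas your $\overline S$ argument supplies one.
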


\renewcommand\vec[1]{\ensuremath\boldsymbol{#1}}

	\begin{definition} \label{def:stab_frame}
		An $n$-qubit {\em stabilizer frame} $\cF$ is a set of $k\leq 2^n$ 
		stabilizer states $\{\ket{\psi_j}\}_{j=1}^{k}$ that forms an
		orthogonal subspace basis in the Hilbert space. 
		We represent $\cF$ by a pair consisting of 
		($i$) a stabilizer matrix $\cM$ and ($ii$) a set 
		of distinct {\em phase vectors} $\{\sigma_j\}_{j=1}^k$,
		where $\sigma_j\in\{\pm 1\}^n$. We use $\cM^{\sigma_j}$ to 
		denote the ordered assignment of the elements in $\sigma_j$ 
		as the ($\pm 1$)-phases of the rows in $\cM$. 
		Therefore, state $\ket{\psi_j}$ is represented by $\cM^{\sigma_j}$.
		The size of the frame, which we denote by $|\cF|$, is equal to $k$. 
	\end{definition}

Each phase vector $\sigma_j$ can be viewed as a binary ($0$-$1$) encoding 
of the integer index that denotes the respective basis vector. {\color{blue} Thus, when dealing with 
$64$ qubits or less}, a phase vector can be compactly represented by 
a $64$-bit integer (modern CPUs also support $128$-bit integers). 
To represent an arbitrary state $\ket{\Psi}$ using $\cF$, one 
additionally maintains a vector of complex amplitudes $\vec{a}=(a_1, \ldots, a_k)$, 
which corresponds to the decomposition of $\ket{\Psi}$ in the basis 
$\{\ket{\psi_j}\}_{j=1}^k$ defined by $\cF$, i.e., 
$\ket{\Psi} = \sum_{j=1}^ka_j\ket{\psi_j}$ and $\sum_{j=1}^k|a_j|^2=1$. 
Observe that each $a_j$ forms a pair with phase vector $\sigma_j$ in $\cF$
since $\ket{\psi_j}\equiv\cM^{\sigma_j}$. Any stabilizer state can be 
viewed as a one-element frame. 

	\begin{example}\label{ex:frame}
		Let $\ket{\Psi}=a_1(\ket{00}+\ket{01})+a_2(\ket{10}+\ket{11})$. Then
		$\ket{\Psi}$ can be represented by the stabilizer frame $\cF$ depicted 
		in Figure~\ref{fig:exfrm}.
	\end{example}

	\begin{figure}[!b]
		\centering
		\includegraphics[scale=.45]{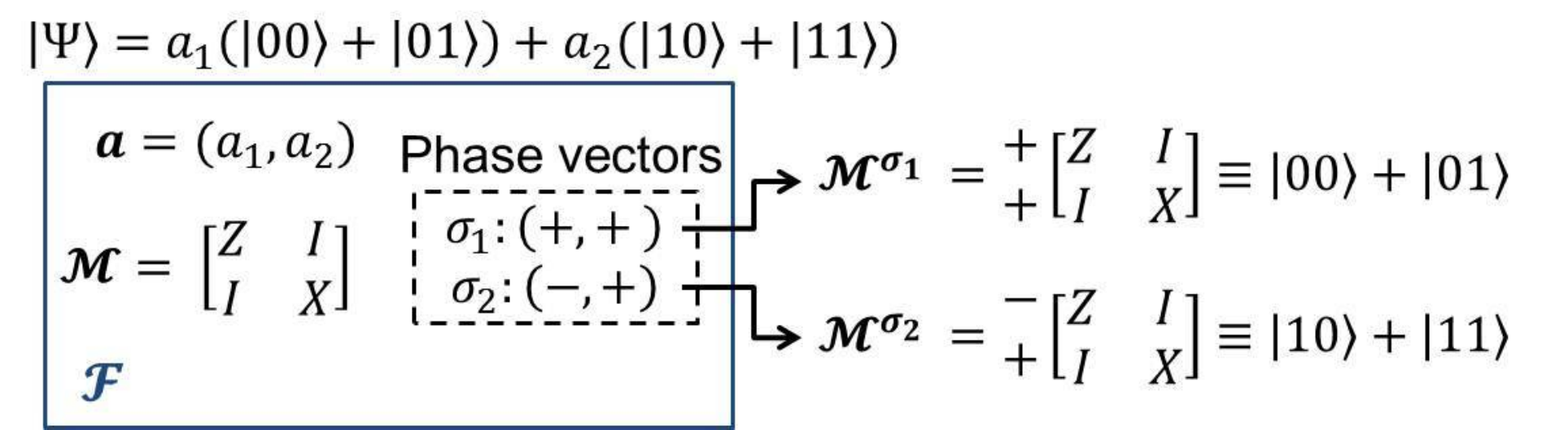}
		\caption{\label{fig:exfrm} A two-qubit stabilizer state
		$\ket{\Psi}$ whose frame representation uses two phase vectors.
		}
	\end{figure}

We now describe several {\em frame operations} that are useful
for manipulating stabilizer-state superpositions.

\newcommand{\frotate}{\mathrm{\tt ROTATE}}
\newcommand{\fcof}{\mathrm{\tt COFACTOR}}
\newcommand{\fmerge}{\mathrm{\tt MERGE}}
\newcommand{\fcoal}{\mathrm{\tt COALESCE}}

\ \\
\noindent
{$\mathbf\frotate(\cF, U)$}. Consider the stabilizer basis 
$\{\ket{\psi_j}\}_{j=1}^k$ defined by frame $\cF$. A stabilizer 
or Pauli gate $U$ acting on $\cF$ maps such a basis to 
$\{U\ket{\psi_j}=e^{i\theta_j}\ket{\varphi_j}\}_{j=1}^k$, 
where $e^{i\theta_j}$ is the global phase of stabilizer state $\ket{\varphi_j}$.
Since we obtain a new stabilizer basis that spans the same subspace, 
this operation effectively {\em rotates the stabilizer frame}.
Computationally, we perform a frame rotation as follows. First, update 
the stabilizer matrix associated with $\cF$ as per Section~\ref{sec:stab}. 
Then, iterate over the phase vectors in $\cF$ and update each one accordingly 
(Table~\ref{tab:cliff_mult}). 
Let $\vec{a}=(a_1, \ldots, a_k)\in\mathbb{C}^k$ be the decomposition of $\ket{\Psi}$ 
onto $\cF$. Frame rotation simulates the action of {\color{blue} Clifford} gate $U$
on $\ket{\Psi}$ since,
\begin{equation}\label{eq:gph}
U\ket{\Psi}=\sum_{j=1}^ka_jU\ket{\psi_j}=\sum_{j=1}^ka_je^{i\theta_j}\ket{\varphi_j}
\end{equation}
Observe that the global phase $e^{i\theta_j}$ of each 
$\ket{\varphi_j}$ becomes relative with respect to $U\ket{\Psi}$. 
Therefore, our approach requires that we compute such phases 
explicitly in order to maintain a consistent representation.   


\ \\\noindent
{\bf Global phases of states in $\cF$}. Recall from Theorem~\ref{th:stabst}
that any stabilizer state $\ket{\psi}=\cC\ket{0\ldots 0}$ for some 
stabilizer circuit $\cC$. To compute 
the global phase of $\ket{\psi}$, one keeps track of 
the global factors generated when each gate in $\cC$ 
is simulated in sequence. In the context of frames, we maintain 
the global phase of each state in $\cF$ using the amplitude vector $\mathbf{a}$.
Let $\cM$ be the matrix associated with $\cF$ and let $\mathbf{\sigma_j}$ be 
the ($\pm$)-phase vector in $\cF$ that forms a pair with $a_j\in \mathbf{a}$.
When simulating {\color{blue} Clifford} gate $U$, each $a_j$ is updated as follows: 

	\begin{algorithm}\small
		{\bf 1)}~Set the leading phases of the rows in $\cM$ to $\mathbf{\sigma_j}$. \\
		{\bf 2)}~Obtain a basis state $\ket{b}$ from $\cM$ and store 
		its non-zero amplitude $\beta$. If $U$ is the Hadamard gate, it
		may be necessary to sample a sum of two non-zero
		basis amplitudes (one real, one imaginary). \\
		{\bf 3)}~Consider the matrix representation of $U$ and the vector
		representation of $\beta\ket{b}$, and compute $U(\beta\ket{b})=\beta'\ket{b'}$ 
		via matrix-vector multiplication. \\
		{\bf 4)}~Obtain $\ket{b'}$ from $U\cM U^\dag$ and 
		store its amplitude~$\gamma\neq 0$. \\
		{\bf 5)}~Compute the global factor generated
		as $a_j=(a_j\cdot\beta')/\gamma$.
	\end{algorithm}
	
By Observation~\ref{obs:stabst_amps}, $\cM$ needs to be in row-echelon 
form (Figure \ref{fig:sminv}b) in order to sample the computational-basis
amplitudes $\ket{b}$ and $\ket{b'}$.
Thus, simulating gates with global-phase maintenance 
would take $O(n^3|\cF|)$ time for $n$-qubit stabilizer
frames. To improve this, we introduce a simulation invariant.
	\begin{invariant} \label{inv:sminv}
		The stabilizer matrix $\cM$ associated with $\cF$ 
		remains in row-echelon form during simulation.
	\end{invariant}
	
Since {\color{blue} Clifford} gates affect at most two columns of $\cM$,
Invariant~\ref{inv:sminv} can be repaired with $O(n)$ row multiplications. 
Since each row multiplication takes $\Theta(n)$, 
the runtime required to update $\cM$ during 
global-phase maintenance simulation is $O(n^2)$. 
Therefore, for an $n$-qubit stabilizer frame, the overall runtime 
for simulating a single {\color{blue} Clifford} 
gate is $O(n^2 + n|\cF|)$ since one can memoize the updates to 
$\cM$ required to compute each $a_j$. 
Another advantage of maintaining this 
invariant is that the outcome of 
deterministic measurements (Section~\ref{sec:stab}) can be
decided in time linear in $n$ since it eliminates the need
to perform Gaussian elimination.

{\color{blue}

\ \\\noindent
{$\mathbf\fcof(\cF, c)$}. This operation facilitates measurement of
stabilizer-state superpositions and simulation of non-Clifford
gates using frames. (Recall from Section~\ref{sec:background} 
that post-measurement states are also called cofactors.) Here, 
$c\in\{1,2,\ldots, n\}$ is the cofactor index. Let 
$\{\ket{\psi_j}\}_{j=1}^k$ be the stabilizer basis defined by $\cF$. 
Frame cofactoring maps such a basis to $\{\ket{\psi_j^{c=0}},\ket{\psi_j^{c=1}}\}_{j=1}^k$.
Therefore, after a frame is cofactored, its size either
remains the same (qubit $c$ was in a deterministic state
and thus one of its cofactors is empty) or doubles (qubit $c$ was in 
a superposition state and thus both cofactors are non-empty). 
We now describe the steps required to cofactor~$\cF$. 

	\begin{algorithm}\small
		{\bf 1)}~Check $\cM$ associated with $\cF$ to determine 
		whether qubit $c$ is a random (deterministic) state. (Section~\ref{sec:stab})\\
		{\bf 2a)}~Suppose qubit $c$ is in a deterministic state. Since $\cM$ is maintained 
		in row-echelon form (Invariant~\ref{inv:sminv}) no frame updates 
		are necessary. \\
		{\bf 2b)}~In the randomized-outcome case, apply the measurement algorithm 
		from Section~\ref{sec:stab} to $\cM$ while forcing the outcome to 
		$x\in\{0,1\}$. This is done twice~--~once for each $\ket{x}$-cofactor, 
		and the row operations performed on $\cM$ are memoized each time. \\
		{\bf 3)}~Let $\mathbf{\sigma_j}$ be the ($\pm$)-phase vector in $\cF$ 
		that forms a pair with $a_j$. Iterate over each $\langle\mathbf{\sigma_j}, a_j\rangle$
		pair and update its elements according to the memoized operations. \\
		{\bf4)}~For each $\langle\mathbf{\sigma_j}, a_j\rangle$, insert a new phase
		vector-amplitude pair corresponding to the cofactor state added to the stabilizer basis.
	\end{algorithm}

\noindent
Similar to frame rotation, the runtime of cofactoring is linear in the number of phase vectors
and quadratic in the number of qubits. However, after this operation, the number 
of phase vectors (states) in $\cF$ will have grown by a (worst case) factor of two.
Furthermore, any state $\ket{\Psi}$ represented by $\cF$ is 
invariant under frame cofactoring.

\begin{example}
	Figure~\ref{fig:toffcof} shows how $\ket{\Psi}=\ket{000}+\ket{010}+\ket{100}+\ket{110}$ 
	is cofactored with respect to its first two qubits. A total of four cofactor states are 
	obtained.
\end{example}

} 

\section{Simulating Quantum Circuits \\ with Stabilizer Frames} 
\label{sec:sframes}

Let $\cF$ be the stabilizer frame used to represent the
$n$-qubit state $\ket{\Psi}$. Following our discussion 
in Section~\ref{sec:engr}, any stabilizer 
or Pauli gate can be simulated directly via frame rotation. 

Suppose we want to simulate the action of $TOF_{c_1c_2t}$, where $c_1$ and $c_2$
are the control qubits, and $t$ is the target.
First, we decompose $\ket{\Psi}$ into all four of its double cofactors
(Section~\ref{sec:background}) over the control qubits to obtain
the following equal superposition of orthogonal states:
\begin{equation*}
	\ket{\Psi} = \frac{\ket{\Psi^{c_1c_2=00}} + \ket{\Psi^{c_1c_2=01}}
	+ \ket{\Psi^{c_1c_2=10}} + \ket{\Psi^{c_1c_2=11}}}{2}
\end{equation*}
{\color{blue} 
Here, we assume the most general case where all
the $c_1c_2$ cofactors are non-empty. The number 
of states in the superposition obtained could also be two 
(one control qubit has an empty cofactor) or one 
(both control qubits have empty cofactors). After cofactoring,
we compute the action of the Toffoli as,
}
\begin{align}\label{eq:toff}
	TOF_{c_1c_2t}\ket{\Psi} =(&\ket{\Psi^{c_1c_2=00}} + \ket{\Psi^{c_1c_2=01}} \notag \\
	& + \ket{\Psi^{c_1c_2=10}} + X_t\ket{\Psi^{c_1c_2=11}})/2 
\end{align}
where $X_t$ is the Pauli gate (NOT) acting on target $t$. 
We simulate Equation~\ref{eq:toff} with the following 
frame operations. (An example of the process is depicted in 
Figure~\ref{fig:toffcof}.)

\begin{algorithm}\small
{\bf 1)} $\fcof(\cF, c_1)$. \\
{\bf 2)} $\fcof(\cF, c_2)$. \\
{\bf 3)} Let $Z_j$ be the Pauli operator with a $Z$ literal in 
its $j^{th}$ position and $I$ everywhere else. Due 
to Steps 1 and 2, the matrix $\cM$ associated with $\cF$ must
have two rows of the form $Z_{c_1}$ and $Z_{c_2}$. Let $u$ and $v$ 
be the indices of such rows, respectively. For each phase vector 
$\sigma_{j\in\{1,\ldots,|\cF|\}}$, \\ if the $u$ and $v$ elements 
of $\sigma_j$ are both -$1$ (i.e., if the phase vector 
corresponds to the \begin{footnotesize}$\ket{\Psi^{c_1c_2=11}}$\end{footnotesize} cofactor), flip the 
value of element $t$ in $\sigma_j$ (apply $X_t$ to this cofactor).
\end{algorithm}

\begin{figure}[!b]
	\centering\hspace{-.1cm}
	\includegraphics[scale=.38]{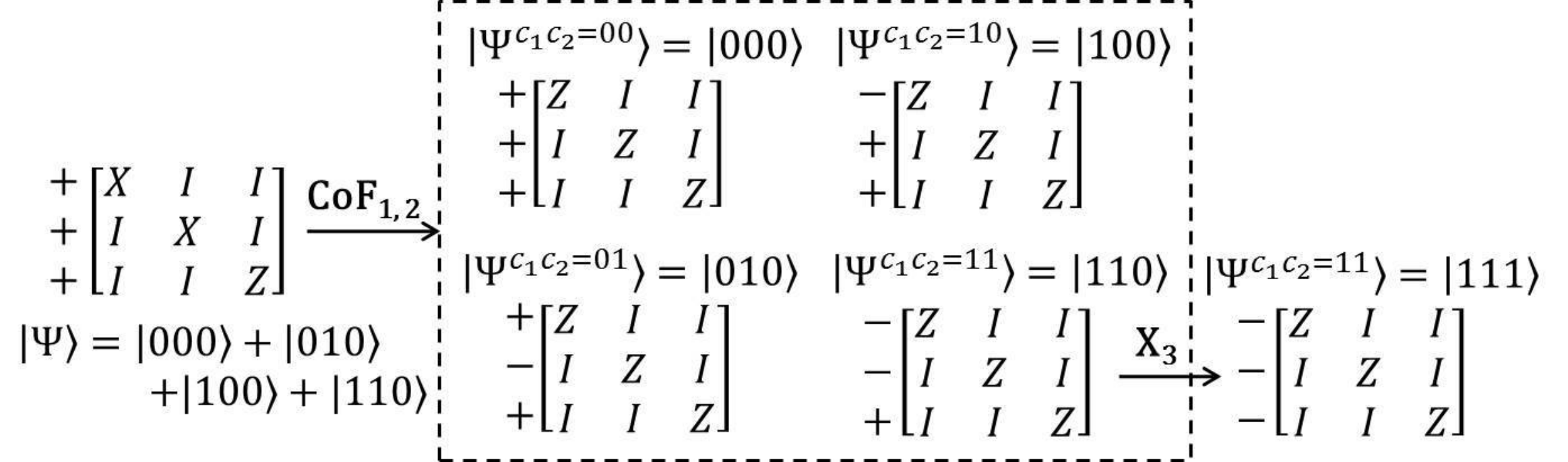}
	\caption{\label{fig:toffcof} Simulation of $TOF_{c_1c_2t}\ket{\Psi}$
	using a stabilizer-state superposition (Equation~\ref{eq:toff}). 
	Here, $c_1=1$, $c_1=2$ and $t=3$. Amplitudes are omitted for clarity
	and the $\pm$ phases are prepended to matrix rows. The $X$ gate is 
	applied to the third qubit of the $\ket{\Psi^{c_1c_2=11}}$ cofactor.}
\end{figure}

Controlled-phase gates $R(\alpha)_{ct}$ can also be simulated using
stabilizer frames. This gate applies a phase-shift factor of 
$e^{i\alpha}$ if both the control qubit $c$ and target qubit $t$
are set. Thus, we compute the action of $R(\alpha)_{ct}$ as,
\begin{align} \label{eq:crot}
	R(\alpha)_{ct}\ket{\Psi} =(&\ket{\Psi^{ct=00}} + \ket{\Psi^{ct=01}} \notag \\
		& + \ket{\Psi^{ct=10}} + e^{i\alpha}\ket{\Psi^{ct=11}})/2
\end{align}
Equation~\ref{eq:crot} can be simulated via frame-based simulation
using a similar approach as discussed for $TOF$ gates. Let 
$(a_1, \ldots, a_{|\cF|})$ be the decomposition of 
$\ket{\Psi}$ onto $\cF$. First, cofactor $\cF$ over the $c$ and $t$ qubits. 
Then, for any phase vector $\sigma_{j\in\{1,\ldots,|\cF|\}}$ that 
corresponds to the $\ket{\Psi^{ct=11}}$ cofactor, set $a_j = a_je^{i\alpha}$.
Observe that, in contrast to $TOF$ gates, controlled-$R(\alpha)$ gates
produce biased superpositions. The Hadamard and controlled-$R(\alpha)$ 
gates are used to implement the quantum Fourier transform circuit, 
which plays a key role in Shor's factoring algorithm.

\ \\
\noindent
{\bf Measuring $\cF$}. Since the states in $\cF$ are orthogonal,
the outcome probability when measuring $\cF$
is calculated as the sum of the normalized outcome probabilities 
of each state. The normalization is with respect to the 
amplitudes stored in $\mathbf{a}$, and thus the overall 
measurement outcome may have a non-uniform distribution.
Formally, let $\ket{\Psi} = \sum_i a_i\ket{\psi_i}$ be the
superposition of states represented by $\cF$, the 
probability of observing outcome $x \in \{0,1\}$ upon
measuring qubit $m$ is,
	\begin{equation*}
		p(x)_\Psi = \sum_{i=1}^k|a_i|^2\bra{\psi_i}P^m_x\ket{\psi_i}=\sum_{i=1}^k|a_i|^2 p(x)_{\psi_i}
	\end{equation*}
\noindent
where $P^m_x$ denotes the measurement operator in the
computational basis $x$ as discussed in Section~\ref{sec:background}.
The outcome probability for each stabilizer state $p(x)_{\psi_i}$ is computed
as outlined in Section~\ref{sec:stab}. 
Once we compute $p(x)_\Psi$, we flip a (biased) coin to decide 
the outcome and cofactor the frame such that only the states that are
consistent with the measurement remain in the frame. 

Prior work on simulation of {\color{blue} non-Clifford} gates using the stabilizer 
formalism can be found in~\cite{AaronGottes} where the authors represent 
a quantum state as a sum of $O(4^{2dk})$ density-matrix terms
while simulating $k$ non-Clifford operations acting on $d$ distinct 
qubits. In contrast, the number of 
states in our technique is $O(2^k)$ although we do not handle 
density matrices and perform more sophisticated bookkeeping. 

{\color{blue}

\ \\\noindent
{\bf Technology-dependent gate decompositions}. Stabilizer-frame simulation can be 
specialized to quantum technologies that are represented by libraries of 
primitive gates (and, optionally, macro gates) with restrictions on qubit 
interactions as well as gate parallelism and scheduling. The work
in~\cite{Lin} describes several primitive gate libraries for different quantum
technologies including quantum dot, ion trap and superconducting systems. 
Such libraries can be incorporated into frame-based simulation by 
decomposing the primitive gates into linear combinations of Pauli or 
Clifford operators, subject to qubit-interaction constraints. 
For example, the $T=\left(\begin{smallmatrix} 1 & 0 \\ 0 & e^{i\pi/4}\end{smallmatrix}\right)$
gate and its inverse, which are primitive gates in most quantum machine descriptions,
can be simulated as $T_t\ket{\Psi} =(\ket{\Psi^{t=0}} + e^{\pm i\pi/4}\ket{\Psi^{t=1}})/\sqrt{2}$.
}

\ \\\noindent
{\bf Multiframe simulation}. Although a single frame 
is sufficient to represent a stabilizer-state superposition $\ket{\Psi}$, 
one can sometimes tame the exponential growth of states in $\ket{\Psi}$ by constructing 
a {\em multiframe representation}. Such a representation cuts down the
total number of states required to represent $\ket{\Psi}$, 
thus improving the scalability of our technique. Our experiments 
in Section~\ref{sec:results} show that, when 
simulating ripple-carry adders, the number
of states in $\ket{\Psi}$ grows linearly when multiframes 
are used but exponentially when a single frame is used.
To this end, we introduce an additional frame operation.

\ \\
\noindent
{$\mathbf\fcoal(\cF)$}. One derives a multiframe representation directly 
from a single frame $\cF$ by examining the set of phase vectors and 
identifying {\em candidate pairs} that can be {\em coalesced} into a 
single phase vector associated with a different stabilizer matrix. 
Since we maintain the stabilizer matrix $\cM$ of a frame in row-echelon
form (Invariant~\ref{inv:sminv}), examining the phases corresponding to 
$Z_k$-rows ($Z$-literal in $k^{th}$ column and $I$'s in all other columns) 
allows us to identify the columns in $\cM$ that need to be modified in 
order to coalesce candidate pairs. More generally, suppose $\langle\sigma_r, \sigma_j\rangle$
are a pair of phase vectors from the same $n$-qubit frame. Then
$\langle\sigma_r, \sigma_j\rangle$ is considered a candidate iff 
it has the following properties: ({\em i}) $\sigma_r$ and $\sigma_j$ 
are equal up to $m\leq n$ entries corresponding to $Z_k$-rows 
(where $k$ is the qubit the row stabilizes), 
and ({\em ii}) $a_r=i^{d}a_j$ for some $d\in\{0,1,2,3\}$
(where $a_r$ and $a_j$ are the frame amplitudes paired 
with $\sigma_r$ and $\sigma_j$). 
Let $\vec{e}=\{e_1,\ldots,e_{m}\}$ be the indices of 
a set of differing phase-vector elements, and let
$\vec{v}=\{v_1,\ldots,v_{m}\}$ be the qubits stabilized 
by the $Z_k$-rows identified by $\vec{e}$. The steps in our 
coalescing procedure are:
\begin{algorithm}\small
	{\bf 1)}~Sort phase vectors such that {\em candidate pairs}
	with differing elements $\vec{e}$ are next to each other.\\
	{\bf 2)}~Coalesce candidates into a new set 
	of phase vectors $\vec{\sigma'}$. \\
	{\bf 3)}~Create a new frame $\cF'$ consisting of 
	$\vec{\sigma'}$ and matrix \begin{footnotesize}$\cC\cM\cC^\dag$\end{footnotesize},
	where\begin{footnotesize}
	$\cC$=CNOT$_{v_1, v_2}$CNOT$_{v_1, v_3}\cdots$CNOT$_{v_1, v_m}$P$^d_{v_1}$H$_{v_1}$\end{footnotesize}. \\
	{\bf 4)}~Repeat Steps 2--3 until no candidate pairs remain.
\end{algorithm}

\newcommand{\flst}{\mathbf{F}}

{\color{blue}

During simulation, we execute the {\em coalescing} operation after the set of
phase vectors in a frame is expanded via cofactoring. Therefore,
the choice of $\vec{e}$ (and thus $\vec{v}$) is driven by the $Z_k$-rows produced after a frame
is cofactored. (Recall that cofactoring modifies $\cM$ such that each 
cofactored qubit $k$ is stabilized by a $\pm Z_k$ operator.)
The output of our coalescing operation is a list of 
$n$-qubit frames $\flst=\{\cF'_1, \cF'_2, \ldots, \cF'_s\}$ (i.e., a multiframe) 
that together represent the same superposition as the original input frame $\cF$. 
The size of the multiframe generated is half the number of phase vectors
in the input frame. The runtime of this procedure is dominated by Step~1. Each phase-vector 
comparison takes $\Theta(n)$ time. Therefore, the runtime of Step~1 and our overall coalescing
procedure is $O(nk\log k)$ for a single frame with $k$ phase vectors. 

\begin{example}\label{ex:coalesce}
Suppose we coalesce the frame $\cF$ depicted in Figure~\ref{fig:multifrm}.
Candidate pairs are $\langle\sigma_1, \sigma_2\rangle$ and $\langle\sigma_3, \sigma_4\rangle$,
with $\vec{e}=\{2\}$ and $\vec{e}=\{2,3\}$, respectively.
To obtain $\cF_1$, conjugate the second column of $\cM$ by 
an H gate (Step~3), which will coalesce $\langle\sigma_1, \sigma_2\rangle$
into a single phase vector $\sigma_1$. 
Similarly, to obtain $\cF_2$, conjugate the second column 
by H, then conjugate the second and third columns by CNOT, 
which will coalesce $\langle\sigma_3, \sigma_4\rangle$. Observe that no P
gates are applied since $d=0$ for all pairs in $\vec{a}$.
\end{example}

} 

\begin{figure}[!t]
	\centering
	\includegraphics[scale=.36]{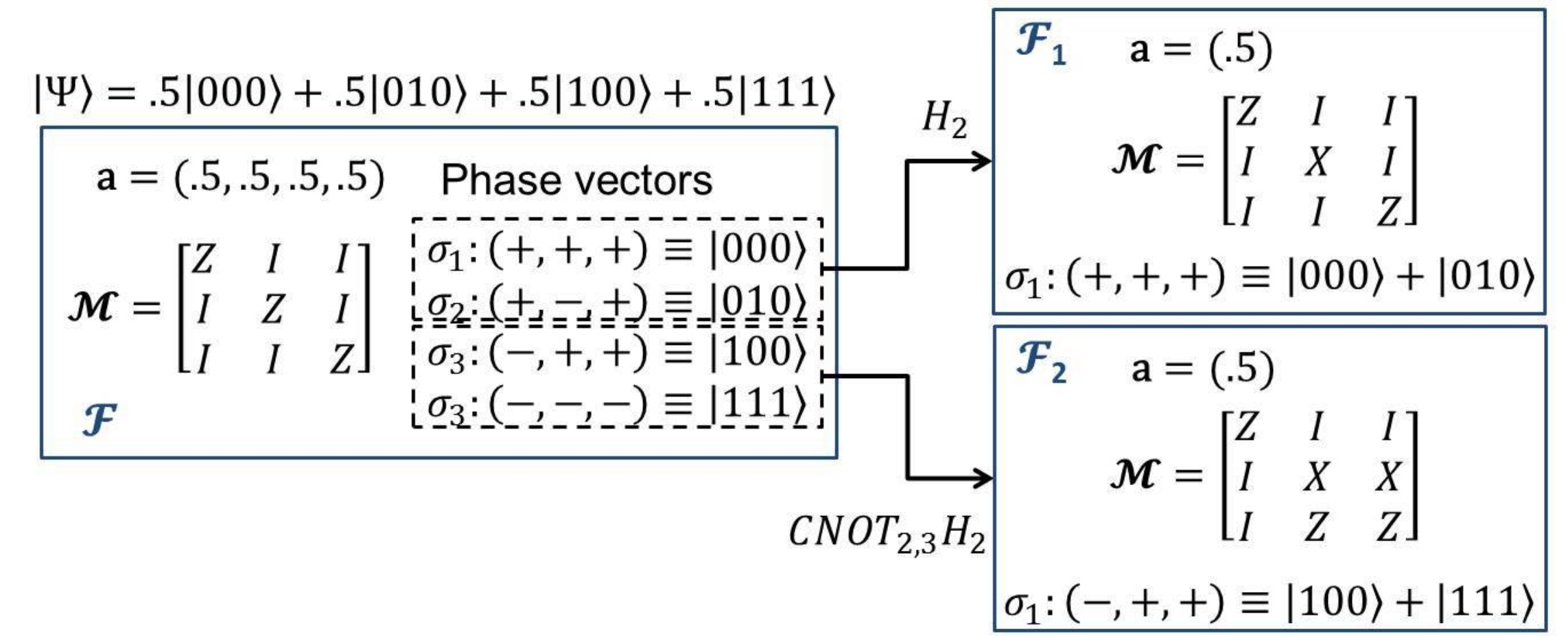}
	\vspace{-5pt}
	\caption{\label{fig:multifrm} Example of how a multiframe 
	representation is derived from a single-frame representation.
	}
\end{figure}


Candidate pairs can be identified even in the absence of $Z_k$-rows in 
an $n$-qubit $\cM$. By Corollary~\ref{cor:stab_allzeros}, one can always find 
a stabilizer circuit $\cC$ that maps $\cM$ to the matrix structure depicted in Figure~\ref{fig:sminv}a,
whose rows are all of $Z_k$ form. Several $O(n^3)$-time algorithms exist for obtaining 
$\cC$ \cite{AaronGottes, Audenaert, Garcia}. We leverage such 
algorithms to extend our coalescing operation as follows:
\begin{algorithm}\small
	{\bf 1)}~Find $\cC$ that maps $\cM$ to computational-basis form.\\
	{\bf 2)}~$\frotate(\cF, \cC)$.\\
	{\bf 3)}~$\{\cF'_1, \cF'_2, \ldots, \cF'_s\}\leftarrow\fcoal(\cF)$. \\
	{\bf 4)}~$\frotate(\cF'_i, \cC^\dag)$ for $i\in\{1,\ldots,s\}$.
\end{algorithm}

To simulate stabilizer, Toffoli and controlled-$R(\alpha)$
gates using multiframe $\flst$, we apply single-frame operations 
to each frame in the list independently. For
Toffoli and controlled-$R(\alpha)$ gates,  
additional steps are required: 
\begin{algorithm}\small
	{\bf 1)}~Apply the coalescing procedure to each frame 
	and insert the new ``coalesced'' frames in the list. \\
	{\bf 2)}~Merge frames with equivalent stabilizer matrices. \\
	{\bf 3)}~Repeat Steps 1--2 until no new frames are generated. 
\end{algorithm}

\noindent
{\bf Orthogonality of multiframes}. We introduce the following invariant 
to facilitate simulation of quantum measurements on multiframes.
	\begin{invariant} \label{inv:ortho}
		The stabilizer frames that represent a superposition
		of stabilizer states remain mutually orthogonal during simulation,
		i.e., every pair of (basis) vectors from any two frames 
		are orthogonal.
	\end{invariant}

\newcommand{\mlst}{\mathbf{M}}
\newcommand{\plst}{\mathbf{P}}

Given multiframe $\flst=\{\cF_1,\dots,\cF_k\}$, one needs to consider two
separate tasks in order to maintain Invariant~\ref{inv:ortho}.
The first task is to verify the pairwise orthogonality of
the states in $\flst$. The orthogonality of two
$n$-qubit stabilizer states can be checked using the inner-product algorithm 
describe in~\cite{Garcia}, which takes $O(n^3)$ time. To improve this,
we derive a heuristic based on 
Observation~\ref{obs:stbbasis}, which takes advantage
of similarities across the (canonical) matrices in $\flst$ 
to avoid expensive inner-product computations in many cases. 
We note that, when simulating quantum circuits that exhibit significant 
structure, $\flst$ contains similar stabilizer matrices 
with equivalent rows (Pauli operators).  
Let $\mlst=\{\cM_1,\dots,\cM_k\}$
be the set of $n$-qubit stabilizer matrices in $\flst$. Our heuristic 
keeps track of a set of Pauli operators $\plst=\{P_1,P_2,\ldots,P_{k\leq n}\}$, 
that form an {\em intersection} across the matrices in $\mlst$.

\begin{example}
	Consider the multiframe from Figure~\ref{fig:multifrm}. The
	intersection $\plst$ consists of $ZII$
	(first row of both $\cM$). 
\end{example}

By Observation~\ref{obs:stbbasis}, if two phase vectors (states)
have different entries corresponding to the Pauli operators 
in $\plst$, then the states are orthogonal and no inner-product
computation is required. For certain
practical instances, including the benchmarks described in 
Section~\ref{sec:results}, we obtain a non-empty $\plst$
and our heuristic proves effective. 
When $\plst$ is empty or the phase-vector pair is
equivalent, we use the algorithm
from \cite{Garcia} to verify orthogonality. Therefore, 
in the worst case, checking pairwise orthogonality
of the states in $\flst$ takes $O(n^3k^2)$ time for
a multiframe that represents a $k$-state superposition. 

The second task to consider when maintaining 
Invariant~\ref{inv:ortho} is the orthogonalization
of the states in $\flst=\{\cF_1,\dots,\cF_k\}$ when 
our check fails. To accomplish this, we iteratively apply the
$\fcof$ operation to each frame in $\flst$ in order
to decompose $\flst$ into a single frame. At each iteration, 
we select a {\em pivot qubit} $p$ based on the composition 
of Pauli literals in the corresponding column. We apply the 
$\fcof(\cF, p)$ operation only if there exists a pair of 
matrices in $\flst$ that contain a different set of Pauli 
literals in the pivot column.
\begin{algorithm}\small
	{\bf 1)}~Find pivot qubit $p$. \\
	{\bf 2)}~$\fcof(\cF_i, p)$ for $i\in\{1,\ldots,k\}$. \\
	{\bf 3)}~Merge frames with equivalent stabilizer matrices. \\
	{\bf 4)}~Repeat Steps 1--3 until a single frame remains.
\end{algorithm}

\begin{figure}[!b]
	\centering
	\includegraphics[scale=.24]{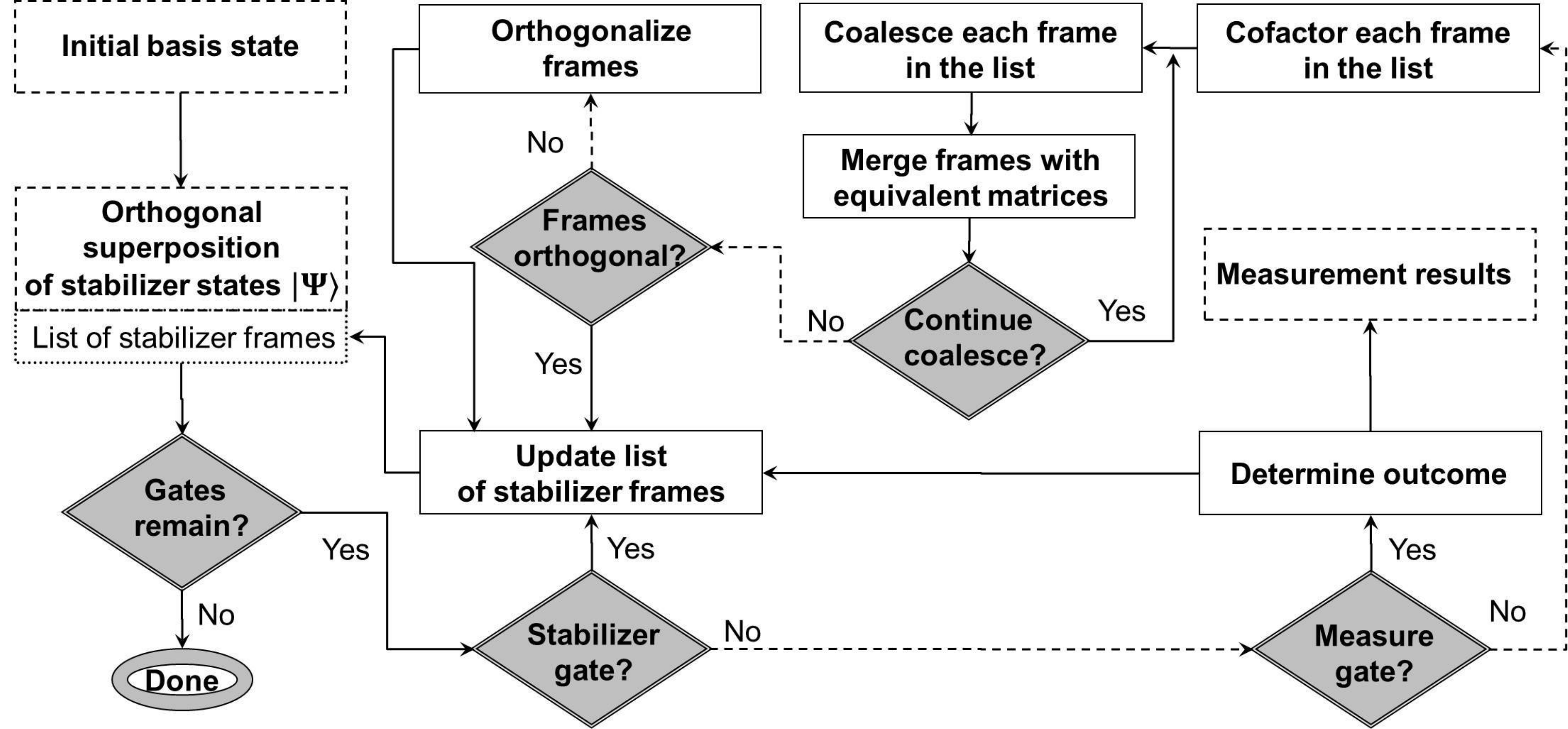} 
	\caption{\label{fig:flow}Overall simulation flow for \quipu.}
\end{figure}

{\color{blue}
Observe that the order of pivot selection does not 
impact the performance of orthogonalization. 
Each iteration of the algorithm can potentially double the number 
of states in the superposition. Since the algorithm terminates when a single frame remains, 
the resulting states are represent by distinct phase vectors and are therefore 
pairwise orthogonal. 
In the worst case, all $n$ qubits are selected as pivots and the resulting
frame constitutes a computational-basis decomposition of size $2^n$.

The overall simulation flow of our frame-based techniques is shown 
in Figure~\ref{fig:flow} and implemented in 
our software package \quipu. 
\begin{example} 
	Figure~\ref{fig:exflow} depicts the main steps of the \quipu 
	simulation flow for a small non-Clifford circuit.
\end{example}
}

\begin{figure*}[!t]
	\centering
	\includegraphics[scale=.32]{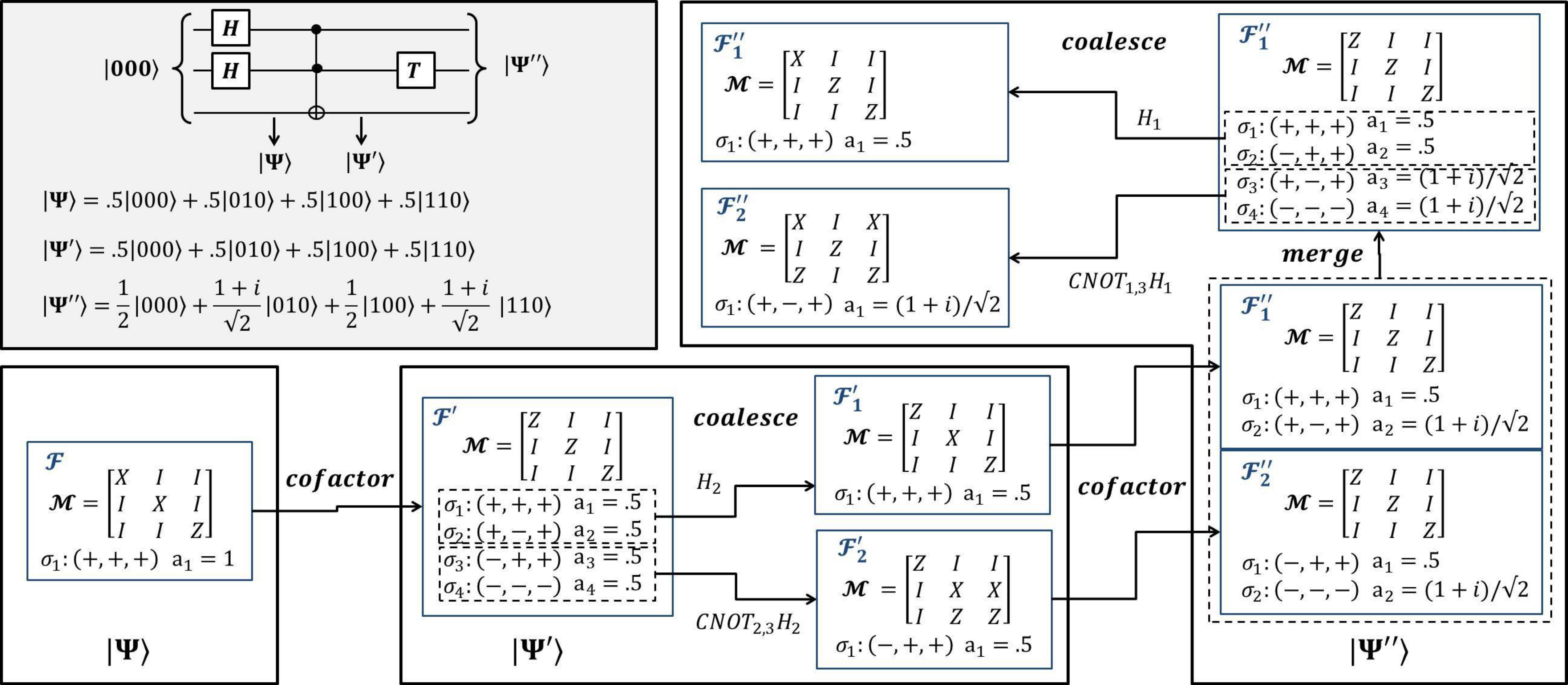} 
	\vspace{5pt}
	\caption{\label{fig:exflow} Example simulation flow for a small non-Clifford circuit (top left) using \quipu.
	The multiframes obtained are pairwise orthogonal and thus no orthogonalization is required.}
	\vspace{-10pt}
\end{figure*}

\section{Parallel Simulation} 
\label{sec:mthreads}

Unlike other techniques based on compact representations of quantum states 
(e.g., using BDD data structures~\cite{Viamontes}), most frame-based operations are inherently 
parallel and lend themselves to multi-threaded implementation. 
The only step in Figure~\ref{fig:flow} 
that presents a bottleneck for 
a parallel implementation is the orthogonalization procedure,
which requires communication across frames. 
All other processes at both the single- and multi-frame
levels can be executed on large subsets 
of phase vectors independently. 

\begin{figure}[!b]
\centering
\begin{lstlisting}[frame=single]
template<class Func, class Iter, class... Params>
auto async_launch( Func f, const Iter begin, const Iter end, 
                  Params... p ) 
  -> vector< decltype( async(f, begin, end, p...) ) >
{
  vector< decltype( async(f, begin, end, p...) ) > futures;
  int size = distance(begin, end);
  int n = size/MTHREAD;
  futures.reserve(MTHREAD);
  for(int i = 0; i < MTHREAD; i++)	{
    Iter first = begin + i*n;
    Iter last = (i < MTHREAD - 1) ? begin + (i+1)*n : end;
    futures.push_back( async( f, first, last, p...) );
  }
  return futures;
}
\end{lstlisting}
\caption{\label{fig:async} Our C++11 template function for parallel
execution of the frame operations described in Section~\ref{sec:sframes}. 
}
\end{figure}

We implemented a multithreaded version of \quipu using the
C++11 thread support library. Each worker thread is launched via
the {\ttfamily std::async()} function. Figure~\ref{fig:async}
shows our wrapper function for executing calls to 
{\ttfamily std::async()}. The {\ttfamily async\_launch()} function
takes as input: ($i$) a frame operation ({\ttfamily Func f}),
($ii$) a range of phase-vector elements defined by 
{\ttfamily Iter begin} and {\ttfamily Iter end}, and ($iii$) any
additional parameters ({\ttfamily Params...~p}) required for
the frame operation. Furthermore, the function returns a
vector of {\ttfamily std::future}~--~the C++11 mechanism for 
accessing the result of an asynchronous operation scheduled 
by the C++ runtime support system. The workload 
(number of phase vectors) of each thread is distributed 
evenly across the number of cores in the system ({\ttfamily MTHREAD}). 
The results from each thread are joined only when orthogonalization
procedures are performed since they require communication between
multiple threads. This is accomplished by calling the 
{\ttfamily std::future::get()} function on each future. 
All {\color{blue} Clifford} gates and measurements are simulated in parallel.

\section{Empirical Validation} 
\label{sec:results}

We tested a single-threaded and multi-threaded versions of \quipu
on a conventional Linux server using several benchmark sets
consisting of stabilizer circuits, quantum ripple-carry adders,
quantum Fourier transform circuits and quantum fault-tolerant (FT) circuits.

{\color{blue} We used a straightforward implementation of the state-vector 
model using an array of complex amplitudes to perform functional verification of: 
($i$)~all benchmarks with $< 30$ qubits and ($ii$)~\quipu output for such benchmarks. 
We simulated such circuits and checked
for equivalence among the resultant states and operators~\cite{Viamontes}.
In the case of stabilizer circuits, we used the equivalence-checking method
described in~\cite{AaronGottes, Garcia}.
}

	\begin{figure}[!b]
	\begin{tabular}{lclc}
		\hspace{-.4cm}
		\rotatebox{90}{\small\hspace{4mm}Avg. Runtime (secs)} & 
		\hspace{-.85cm} \includegraphics[scale=.38]{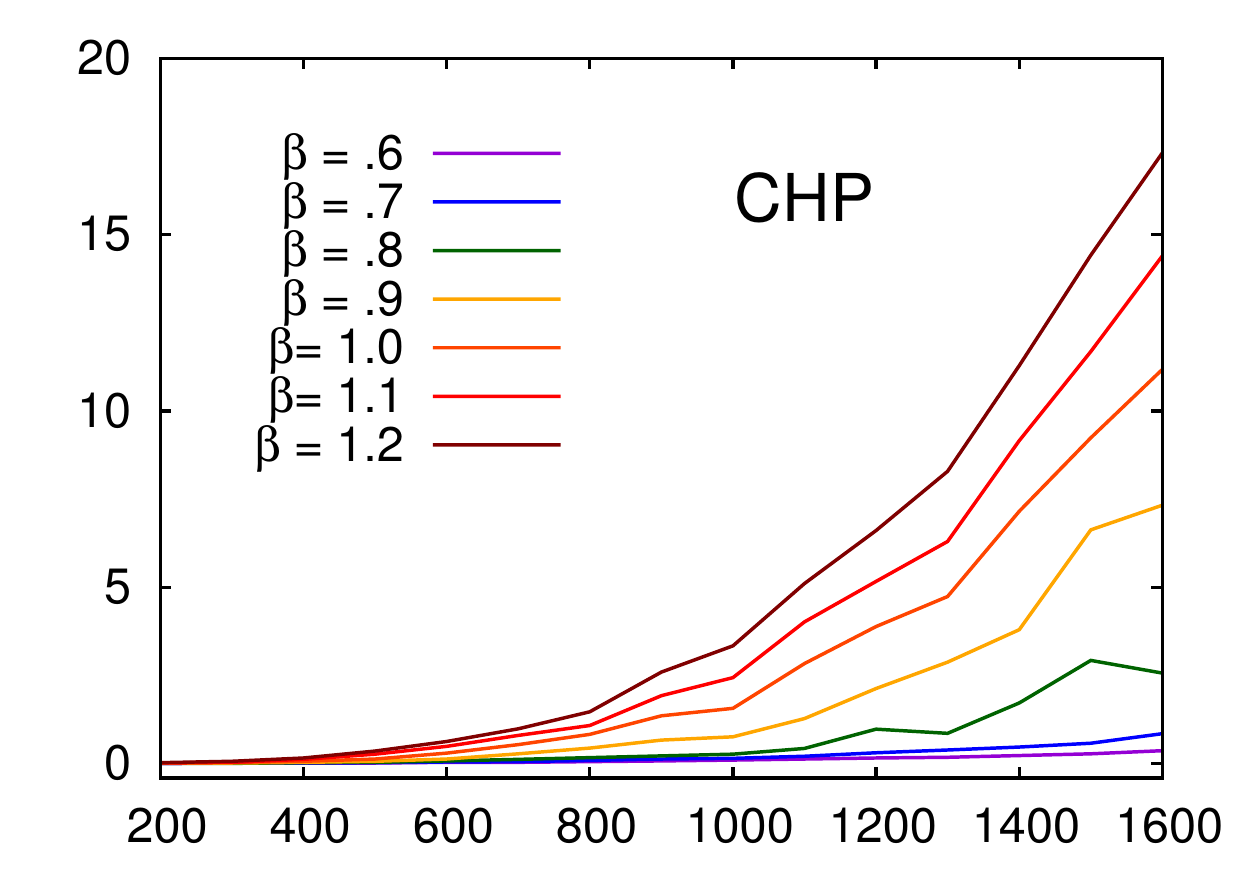} 
		\hspace{-.8cm} \includegraphics[scale=.38]{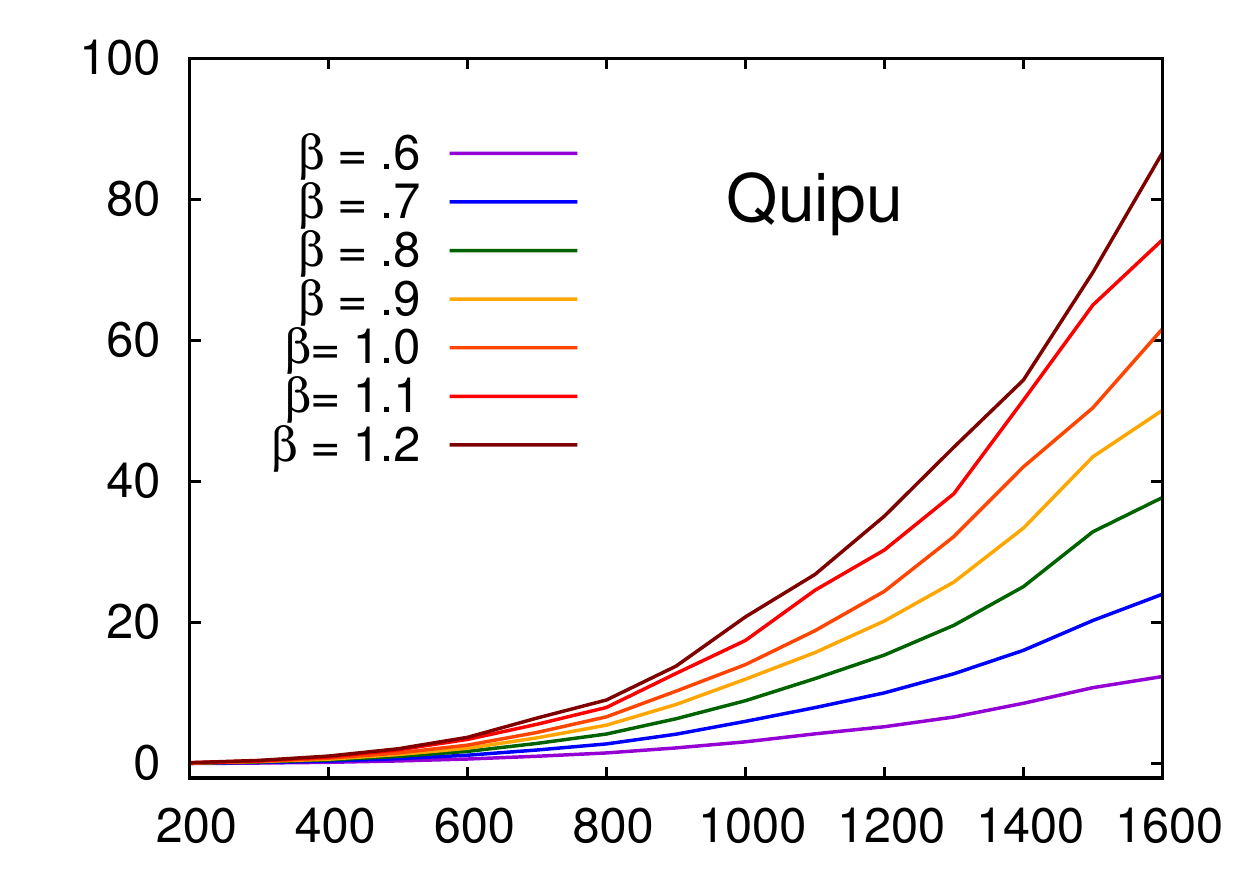} \\
		& \multicolumn{2}{c}{\hspace{2.5cm}\small Number of qubits}
	\end{tabular}
	\caption{\label{fig:stabcomp} Average time needed by \quipu and \chp to simulate an 
	$n$-qubit stabilizer circuit with $\beta n\log n$ gates and $n$ measurements. 
    \quipu is asymptotically as fast as \chp but is not limited
    to stabilizer circuits.
        }
	\end{figure}
	
\ \\\noindent
{\bf Stabilizer circuits}. We compared the runtime performance of single-threaded \quipu 
against that of \chp using a benchmark set similar to the one used in \cite{AaronGottes}. 
We generated random stabilizer circuits on $n$ qubits, for $n\in\{100, 200, \ldots, 1500\}$.
The use of randomly generated benchmarks is justified for our experiments because (\emph{i})~our 
algorithms are not explicitly sensitive to circuit topology and (\emph{ii})~random stabilizer circuits have
been considered representative \cite{Knill}.
For each $n$, we generated the circuits as follows: 
fix a parameter $\beta > 0$; then choose $\beta \lceil n \log_2 n\rceil$ 
random unitary gates (CNOT, P or H) each
with probability $1/3$. Then measure each qubit 
$a \in \{0,\ldots, n-1 \}$ in sequence.
We measured the number of seconds needed to simulate
the entire circuit. The entire procedure was repeated for $\beta$
ranging from $0.6$ to $1.2$ in increments of $0.1$. 
Figure \ref{fig:stabcomp} shows the average time needed by \quipu 
and \chp to simulate this benchmark set. 
The purpose of this 
comparison is to evaluate the overhead of supporting generic 
circuit simulation in \quipu. 
Since \chp is specialized to 
stabilizer circuits, we do not expect \quipu to be faster.
When $\beta = 0.6$, the simulation time appears
to grow roughly linearly in $n$ for both simulators. However, 
when the number of unitary gates is doubled
($\beta = 1.2$), the runtime of both simulators grows roughly quadratically. 
Therefore, the performance of both simulators depends strongly on the circuit being simulated. 
Although \quipu is $5\times$ slower than \chp, we note that \quipu maintains
global phases whereas \chp does not. Nonetheless, Figure~\ref{fig:stabcomp}
shows that \quipu is asymptotically as fast as \chp when simulating 
stabilizer circuits that contain a linear number of measurements.
The multithreaded speedup in \quipu for non-Clifford circuits is 
not readily available for stabilizer circuits.

\ \\\noindent
{\bf Ripple-carry adders}. Our second benchmark set consists of
$n$-bit ripple-carry (Cuccaro) adder \cite{Cuccaro} circuits, which 
often appear as components in many arithmetic circuits \cite{MarkovSaeedi}.
The Cuccaro circuit for $n=3$ is shown in Figure \ref{fig:rcadder}. Such circuits
act on two $n$-qubit input registers,
one ancilla qubit and one carry qubit for a total of $2(n+1)$ qubits. 
We applied H gates to all $2n$ input qubits in order to simulate 
addition on a superposition of $2^{2n}$ computational-basis states. 
Figure~\ref{fig:rcaddres} shows the average runtime needed to simulate
this benchmark set using \quipu. For comparison, we 
ran the same benchmarks on an optimized version of \qpro, 
called \qplt\footnote{\qplt is up to 
$4\times$ faster since it removes
overhead related to \qpro's interpreted front-end for
quantum programming.}, 
specific to circuit simulation \cite{Viamontes}.  
When $n<15$, \qplt is faster than \quipu
because the QuIDD representing the state vector remains compact
during simulation. However, for $n>15$, 
the compactness of the QuIDD is considerably reduced, 
and the majority of \qplt's runtime is spent in non-local 
pointer-chasing and memory (de)allocation. Thus, \qplt fails to scale on such 
benchmarks and one observes an exponential increase in runtime. 
Memory usage for both \quipu and \qplt was nearly unchanged
for these benchmarks. \quipu consumed $4.7$MB on average while
\qplt consumed almost twice as much ($8.5$MB). 
	
	\begin{figure}[!t]
		\centering
		\includegraphics[scale=.65]{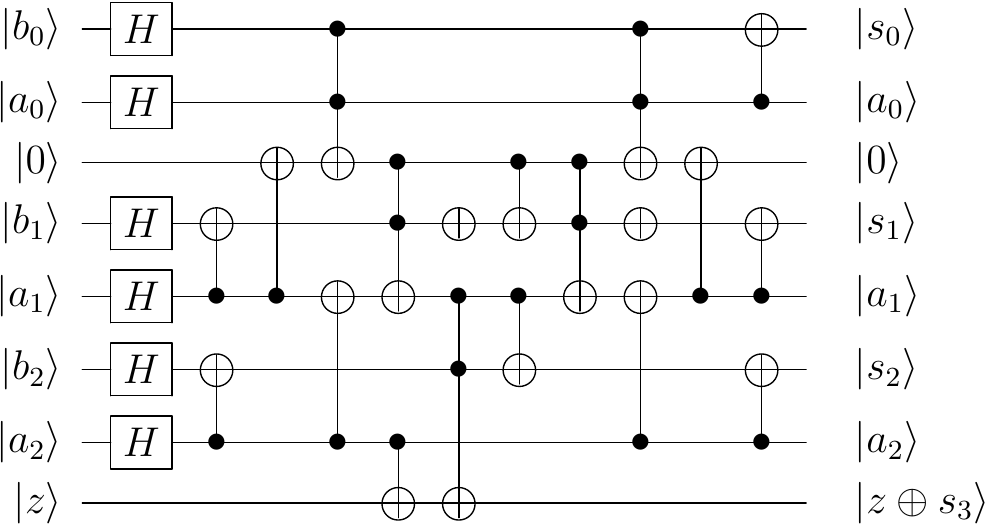}
		\vspace{4pt}
		\caption{\label{fig:rcadder} Ripple-carry (Cuccaro) adder 
		for $3$-bit numbers $a = a_0a_1a_2$
		and $b = b_0b_1b_2$ \cite[Figure 6]{Cuccaro}. 
		The third qubit from the top is an
		ancilla and the $z$ qubit is the carry.
		The $b$-register is overwritten with the result $s_0s_1s_2$.}
	\end{figure}
	
	\begin{figure}[!b]
	\centering
	\begin{tabular}{lc}
		\hspace{-.25cm}
		\rotatebox{90}{\small\hspace{1.5cm} Runtime (secs)} 
		& \hspace{-.83cm}
		\vspace{-.15cm}
		\includegraphics[scale=.50]{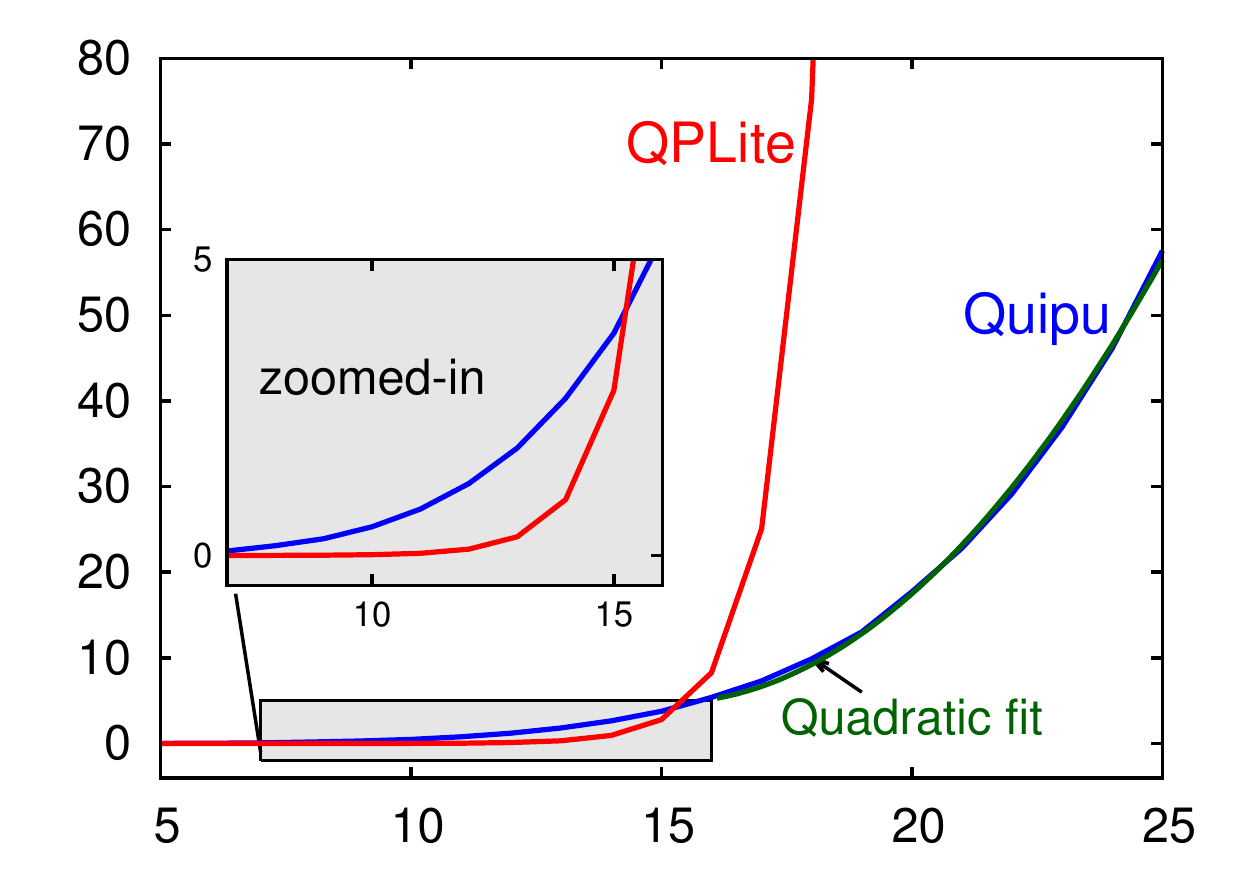} \\
		& \small$n$-bit Cuccaro adder ($2n+2$ qubits)
	\end{tabular}
	\parbox{\linewidth}{
	\caption{\label{fig:rcaddres} Average runtime and memory needed by \quipu 
	and \qpro to simulate $n$-bit Cuccaro adders on an equal superposition of all
	computational basis states. 
        }}
	\end{figure}
	
We ran the same benchmarks using single and multiframes. 
The number of states in the superposition grows exponentially in $n$
for a single frame, but linearly in $n$ when multiple frames are allowed.
This is because $TOF$ gates produce large equal superpositions
that are effectively compressed by our coalescing technique. 
Since our frame-based algorithms require poly($k$) time 
for $k$ states in a superposition, 
\quipu simulates Cuccaro circuits in polynomial time and space 
for input states consisting of large superpositions of basis states. 
On such instances, known linear-algebraic simulation techniques 
(e.g., \qpro) take exponential time while \quipu's runtime
grows quadratically (best quadratic fit $f(x) = 0.5248x^2 - 15.815x + 123.86$ 
with $R^2=.9986$).

The work in \cite{MarkovSaeedi} describes additional quantum arithmetic 
circuits that are based on Cuccaro adders (e.g., subtractors, conditional 
adders, comparators). We used \quipu to simulate such circuits and observed
similar runtime performance as that shown in Figure~\ref{fig:rcaddres}. 

	\begin{figure}[!b]
		\centering
		\includegraphics[scale=.75]{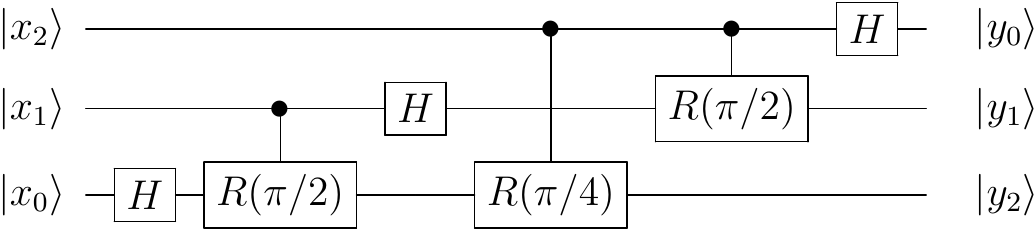}
		\vspace{5pt}
		\caption{\label{fig:qft} The three-qubit QFT circuit.}
	\end{figure}

\ \\\noindent
{\bf Quantum Fourier transform (QFT) circuits}. Our third benchmark set consists 
of circuits that implement the $n$-qubit QFT, which computes the discrete 
Fourier transform of the amplitudes in the input quantum state. 
Let $\ket{x_1x_2\ldots x_n}$, $x_i\in\{0,1\}$ be a computational-basis state
and $\mathbf{x}_{1,2,\ldots,m}=\sum_{k=1}^mx_k 2^{-k}$.
The action of the QFT on this input state can be expressed as:
	\begin{small}
	\begin{align}
		\ket{x_1\ldots x_n} = \frac{1}{\sqrt{2^n}} &
		\left(\ket{0}+e^{2i\pi\cdot\mathbf{x}_{n}}\ket{1}\right)\otimes
		\left(\ket{0}+e^{2i\pi\cdot\mathbf{x}_{n-1,n}}\ket{1}\right)\otimes \notag \\
		&\cdots\otimes\left(\ket{0}+e^{2i\pi\cdot\mathbf{x}_{1,2,\dots, n}}\ket{1}\right)
	\end{align}
	\end{small}
The QFT is used in many quantum algorithms, notably Shor's 
factoring and discrete logarithm algorithms. Such circuits are composed of a network
of Hadamard and controlled-$R(\alpha)$ gates, where $\alpha=\pi/2^k$ and $k$ is the
distance over which the gate acts. The three-qubit QFT circuit is shown in Figure~\ref{fig:qft}.
In general, the first qubit requires one Hadamard gate, the next qubit requires a Hadamard and a controlled-$R(\alpha)$ gate, and each following qubit requires an additional 
controlled-$R(\alpha)$ gate. Summing up the number of gates gives $O(n^2)$ for
an $n$-qubit QFT circuit. Figure~\ref{fig:qftres} shows average runtime and memory usage for both 
\quipu and \qplt on QFT instances for $n=\{10,12,\ldots,20\}$. \quipu runs approximately 
$10\times$ faster than \qplt on average and consumes about $96\%$ less memory.
For these benchmarks, we observed that the number of states in our multiframe
data structure was $2^{n-1}$. This is because controlled-$R(\alpha)$ gates produce biased superpositions (Section~\ref{sec:sframes}) that cannot be effectively compressed using our 
coalescing procedure. Therefore, as Figure~\ref{fig:qftres} shows, the runtime 
and memory requirements of both \quipu and \qplt grow
exponentially in $n$ for QFT instances. However, \quipu scales to 
$24$-qubit instances while \qplt scales to only $18$ qubits. The 
multi-threaded version of \quipu exhibited roughly a $2\times$ speedup
and used a comparable amount of memory on a four-core Xeon server.  

We compared \quipu to a straightforward 
implementation of the state-vector model using an array of complex amplitudes. 
Such non-compact data structures can be streamlined to simulate most quantum 
gates (e. g., Hadamard, controlled-$R(\alpha)$) with limited 
runtime overhead, but scale to only around $30$ qubits due to poor memory scaling. 
Our results showed that \quipu was approximately $3\times$ slower than an array-based
implementation when simulating QFT instances. However, such implementations
cannot take advantage of circuit structure and, unlike \quipu and \qplt, 
do not scale to instances of stabilizer and arithmetic circuits 
with $>30$ qubits (Figures~\ref{fig:stabcomp}~and~\ref{fig:rcaddres}).

	\begin{figure}[!t]
	\begin{tabular}{lclc}
		\hspace{-.4cm}
		\rotatebox{90}{\hspace{6mm}\small Runtime (secs)} 
		& \hspace{-.8cm}
		\includegraphics[scale=.36]{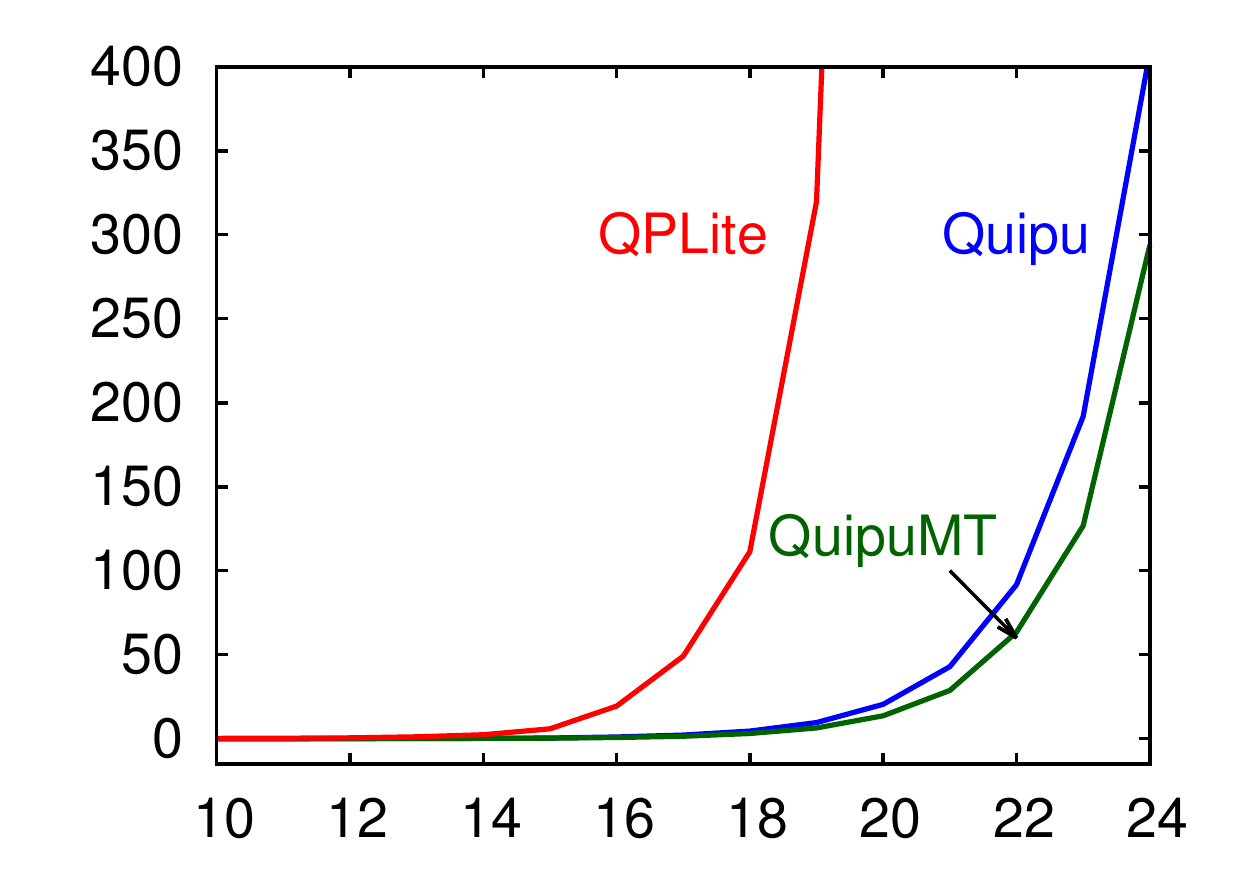} 
		&
		\hspace{-.6cm}
		\rotatebox{90}{\hspace{3mm}\small Peak memory (MB)}
		& \hspace{-.85cm}
		\includegraphics[scale=.36]{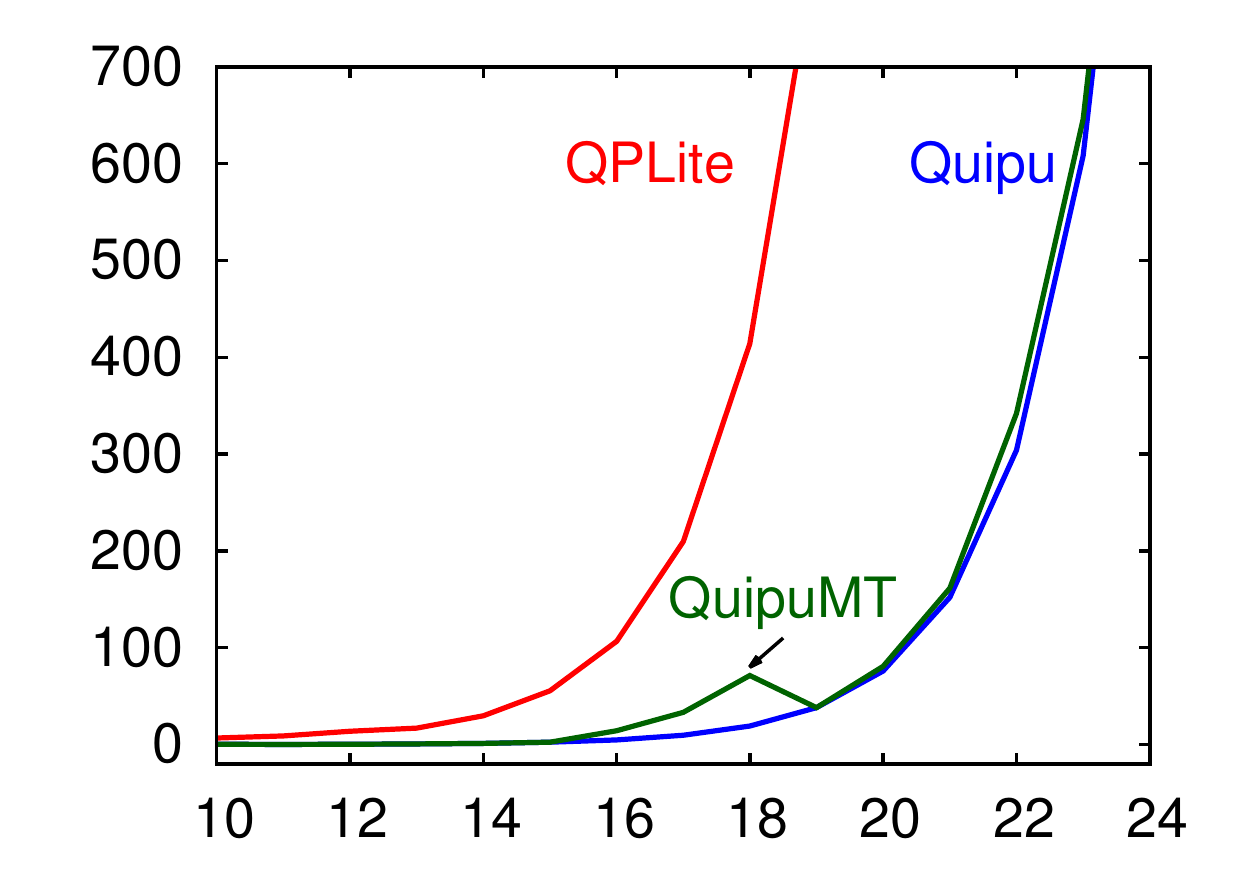}  \\ 
		\multicolumn{4}{c}{\small $n$-qubit QFT circuit}
	\end{tabular}
	\parbox{\linewidth}{
	\caption{\label{fig:qftres} Average runtime and memory needed by \quipu 
	(single and multi-threaded) and \qpro to simulate $n$-qubit 
	QFT circuits, which contain $n(n+1)/2$ gates. We used the $\ket{11\ldots 1}$ 
	input state for all benchmarks.
        }}
        \vspace{-8pt}
	\end{figure}
	
	\begin{table*}[!t]
		\centering\footnotesize
		\parbox[b]{\linewidth}{
		\caption{\label{tab:ftbench} Average time and memory needed by \quipu 
		and \qplt to simulate several quantum FT circuits. The second column shows the QECC used to encode 
		$k$ logical qubits into $n$ physical qubits. Benchmarks with ($^*$) use 
		the $5$-qubit DiVincenzo/Shor code~\cite{DiVincen} instead of the $3$-qubit bit-flip code. 
		We used the $\ket{00\ldots 0}$ input state for all benchmarks. The top numbers from each
		row correspond to direct simulation of Toffoli gates (Section~\ref{sec:sframes}) and the 
		bottom numbers correspond to simulation via the decomposition from Figure~\ref{fig:toffdecomp}.
		Shaded rows are Clifford circuits for mod-exp. 
		}
		\vspace{-5pt}
		}
		\begin{tabular}{|c|c|c|c|c||c|c|c|c||c|c|c|} \hline
			\sc fault-tolerant & \sc qecc & \sc num.  
			& \multicolumn{2}{c||}{\sc num. gates}  
			& \multicolumn{2}{c|}{\sc runtime (secs)} 
			& \multicolumn{2}{c||}{\sc memory (MB)} & \multicolumn{2}{c|}{\sc max size($\Psi$)}
			\\ 
			\sc circuit & $[n, k]$ & \sc qubits & \sc cliff \& meas. & \sc non-cliff. & \qplt 
			& \quipu & \qplt & \quipu & \sc single $\cF$ & \sc multi $\cF$ \\ 
			\hline \hline
			\multirow{2}{10mm}{toffoli$^*$} & \multirow{2}{7mm}{$[15,3]$} & \multirow{2}{3.5mm}{$45$} & $155$  & $15$  & $43.68$ & $\bf 0.20$  & $98.45$ & $\bf 12.76$ & $2816$ & $32$ \\
			 & & & $305$  & $90$  & $72.45$ & $\bf 0.83$  & $137.02$ & $\bf 12.78$ & $8192$ & $32$ \\ 
			\hline
			\multirow{2}{10mm}{halfadd$^*$} & \multirow{2}{7mm}{$[15,3]$} & \multirow{2}{3.5mm}{$45$} & $160$  & $15$  & $43.80$ & $\bf 0.20$  & $94.82$ & $\bf 12.76$ & $2816$ & $32$ \\
			& & & $310$  & $90$  & $75.05$ & $\bf 0.84$  & $137.03$ & $\bf 12.78$ & $8192$ & $32$ \\ 
			\hline
			\multirow{2}{10mm}{fulladd$^*$} & \multirow{2}{7mm}{$[20,4]$} & \multirow{2}{3.5mm}{$80$} & $320$  & $30$  & $84.96$ & $\bf 0.88$  & $91.86$ & $\bf 12.94$ & $2816$ & $32$ \\ 
			& & & $620$  & $180$  & $1173.48$ & $\bf 1.61$  & $139.44$ & $\bf 13.52$ & $16384$ & $32$ \\
			\hline
			\multirow{2}{10mm}{$2^x$mod$15$} & \multirow{2}{7mm}{$[18,6]$} & \multirow{2}{3.5mm}{$81$} & $396$  & $36$  & $4.81$hrs & $\bf 1.48$ & $11.85$ & $\bf 12.96$ & $22528$ & $64$ \\ 
			& & & $756$  & $216$  & $>24$hrs & $\bf 6.08$ & $118.17$ & $\bf 14.23$ & $2^{22}$ & $64$ \\
			\hline
			\rowcolor[gray]{0.85} $4^x$mod$15^*$ & $[30,6]$ & $30$ & $30$ & $0$ & $0.01$ & $\bf< 0.01$ & $6.14$ & $\bf 12.01$ & $1$ & $1$ \\
			\hline
			\multirow{2}{10mm}{$7^x$mod$15$}  & \multirow{2}{7mm}{$[18,6]$} & \multirow{2}{3.5mm}{$81$} & $402$  & $36$  & $11.25$hrs & $\bf 1.52$ & $12.41$ & $\bf 13.29$ & $22528$ & $64$ \\
			& & & $762$  & $216$  & $>24$hrs & $\bf 4.98$ & $134.05$ & $\bf 14.77$ & $2^{22}$ & $64$ \\
			\hline
			\multirow{2}{10mm}{$8^x$mod$15$}  & \multirow{2}{7mm}{$[18,6]$} & \multirow{2}{3.5mm}{$81$} & $399$  & $36$  & $11.37$hrs & $\bf 1.52$ & $12.48$ & $\bf 13.29$ &  $22528$ & $64$ \\
			& & & $759$  & $216$  & $>24$hrs & $\bf 6.08$ & $135.18$ & $\bf 14.77$ &  $2^{22}$ & $64$ \\
			\hline
			\rowcolor[gray]{0.85} $11^x$mod$15^*$  & $[30,6]$ & $30$ & $25$   & $0$   & $0.02$ & $\bf< 0.01$ & $6.14$ & $\bf 12.01$ & $1$ & $1$ \\ \hline
			\multirow{2}{10mm}{$13^x$mod$15$} & \multirow{2}{7mm}{$[18,6]$} & \multirow{2}{3.5mm}{$81$} & $399$  & $36$  & $11.28$hrs & $\bf 1.56$ & $11.85$ & $\bf 12.25$ & $22528$ & $64$  \\
			& & & $759$  & $216$  & $>24$hrs & $\bf 4.64$ & $135.23$ & $\bf 14.62$ & $2^{22}$ & $64$ \\
			\hline
			\rowcolor[gray]{0.85} $14^x$mod$15^*$  & $[30,6]$ & $30$  & $40$   & $0$   & $0.02$ & $\bf< 0.01$ & $6.14$ & $\bf12.01$ & $1$ & $1$ \\
			\hline
		\end{tabular}
	\end{table*}
	
\ \\\noindent
{\bf Fault-tolerant (FT) circuits}. Our next benchmark set consists of 
circuits that, in addition to preparing encoded quantum states, implement
procedures for performing FT quantum operations \cite{DiVincen, NielChu, Preskill}.
FT operations limit the propagation errors from one qubit in a
QECC-register (the block of qubits that encodes a logical qubit) to 
another qubit in the same register, and a single faulty 
gate damages at most one qubit in each register.
One constructs FT stabilizer circuits by executing each {\color{blue} Clifford} gate
transversally\footnote{In a transversal operation, 
the $i^{th}$ qubit in each QECC-register interacts only with the $i^{th}$ 
qubit of other QECC-registers~\cite{Gottes98, NielChu, Preskill}.} 
across QECC-registers
as shown in Figure~\ref{fig:transcnot}. 
{\color{blue} Non-Clifford gates need to be implemented 
using a FT architecture that often requires ancilla qubits, measurements 
and correction procedures conditioned on measurement outcomes. 
Figure~\ref{fig:fttoff} shows a circuit that implements a 
FT-Toffoli operation~\cite{Preskill}. Each line represents 
a $5$-qubit register based on the DiVincenzo/Shor\footnote{The DiVincenzo/Shor $5$-qubit code 
functions successfully in the presence of both bit-flip and phase-flip 
errors even if they occur during correction procedures~\cite{DiVincen}.} 
code, and each gate is applied transversally. The state 
$\ket{cat} = (\ket{0^{\otimes 5}}+\ket{1^{\otimes 5}})/\sqrt{2}$
is obtained using a stabilizer subcircuit (not shown).
The arrows point to the set of gates that is applied 
if the measurement outcome is $1$; no action is taken otherwise.
Controlled-{$Z$} gates are implemented as 
$H_jCNOT_{i,j}H_j$ with control $i$ and target $j$. 
$Z$ gates are implemented as $P^2$. 
}
	\begin{figure}[!t]
		\centering
		\includegraphics[scale=.65]{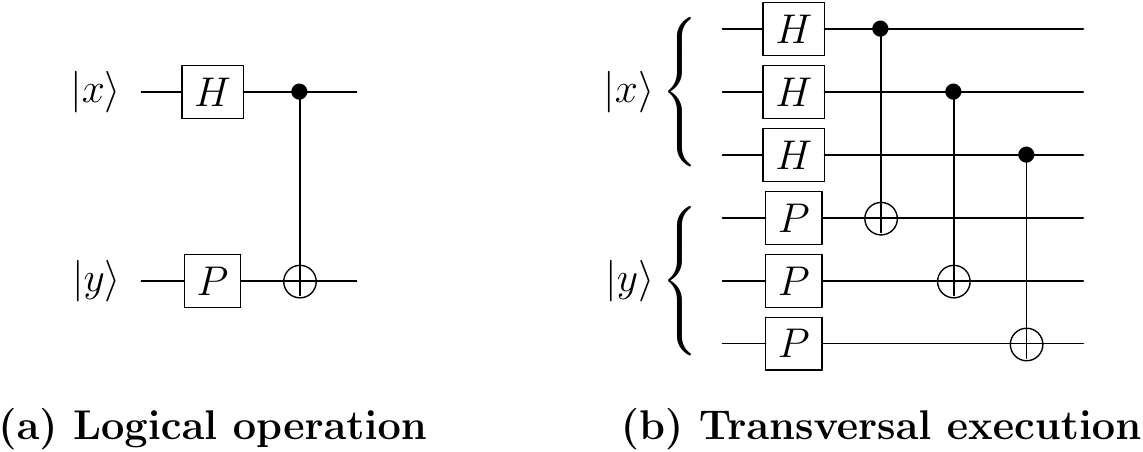}
		\caption{\label{fig:transcnot} Transversal implementation 
		of a stabilizer circuit acting on three-qubit QECC registers.}
	\end{figure}
	
	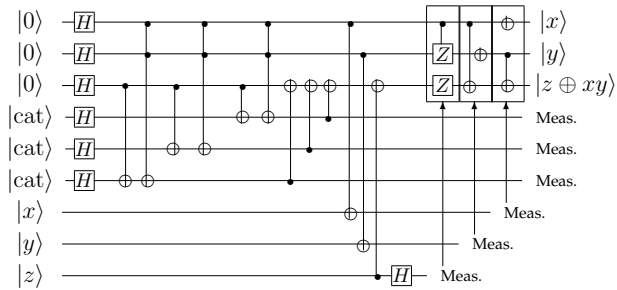
\begin{figure}[!t]
		\centering\hspace{-14mm}
		\scalebox{.75}[.75]{\setlength{\unitlength}{.8pt}
\begin{picture}(335,200)

\put(10,174){\makebox(20,12){\large$\ket{0}$}}
\put(10,154){\makebox(20,12){\large$\ket{0}$}}
\put(10,134){\makebox(20,12){\large$\ket{0}$}}
\put(10,114){\makebox(20,12){\large$\ket{\rm cat}$}}
\put(10,94){\makebox(20,12){\large$\ket{\rm cat}$}}
\put(10,74){\makebox(20,12){\large$\ket{\rm cat}$}}
\put(10,54){\makebox(20,12){\large$\ket{x}$}}
\put(10,34){\makebox(20,12){\large$\ket{y}$}}
\put(10,14){\makebox(20,12){\large$\ket{z}$}}

\put(40,180){\line(1,0){8}}
\put(48,174){\framebox(12,12){$H$}}
\put(40,160){\line(1,0){8}}
\put(48,154){\framebox(12,12){$H$}}
\put(40,140){\line(1,0){8}}
\put(48,134){\framebox(12,12){$H$}}
\put(40,140){\line(1,0){8}}
\put(48,74){\framebox(12,12){$H$}}
\put(40,140){\line(1,0){8}}
\put(48,94){\framebox(12,12){$H$}}
\put(40,140){\line(1,0){8}}
\put(48,114){\framebox(12,12){$H$}}

\put(60,180){\line(1,0){275}}
\put(60,160){\line(1,0){214}}
\put(286,160){\line(1,0){49}}
\put(60,140){\line(1,0){214}}
\put(286,140){\line(1,0){49}}

\put(42,80){\line(1,0){6}}
\put(60,80){\line(1,0){270}}

\put(80,80){\circle{8}}
\put(80,140){\circle*{4}}
\put(80,140){\line(0,-1){64}}

\put(94,179){\circle*{4}}
\put(93,180){\line(0,-1){104}}
\put(94,159){\circle*{4}}
\put(94,80){\circle{8}}


\put(42,100){\line(1,0){6}}
\put(60,100){\line(1,0){270}}
\put(110,100){\circle{8}}

\put(112,139){\circle*{4}}
\put(111,140){\line(0,-1){44}}

\put(130,179){\circle*{4}}
\put(129,180){\line(0,-1){84}}
\put(130,159){\circle*{4}}
\put(130,100){\circle{8}}


\put(42,120){\line(1,0){6}}
\put(60,120){\line(1,0){270}}
\put(154,120){\circle{8}}

\put(154,139){\circle*{4}}
\put(153,140){\line(0,-1){24}}

\put(170,179){\circle*{4}}
\put(170,180){\line(0,-1){64}}
\put(170,159){\circle*{4}}
\put(170,120){\circle{8}}

\put(184,140){\circle{8}}
\put(184,80){\line(0,1){64}}
\put(184,79){\circle*{4}}
\put(197,140){\circle{8}}
\put(196,100){\line(0,1){44}}
\put(196,99){\circle*{4}}
\put(209,140){\circle{8}}
\put(208,120){\line(0,1){24}}
\put(208,119){\circle*{4}}


\put(40,60){\line(1,0){270}}
\put(40,40){\line(1,0){250}}
\put(40,20){\line(1,0){208}}

\put(222,179){\circle*{4}}
\put(221,180){\line(0,-1){124}}
\put(222,59){\circle{8}}

\put(230,159){\circle*{4}}
\put(230,160){\line(0,-1){124}}
\put(230,39){\circle{8}}

\put(239,19){\circle*{4}}
\put(238,20){\line(0,1){124}}
\put(239,140){\circle{8}}

\put(248,14){\framebox(12,12){$H$}}
\put(260,20){\line(1,0){10}}

\put(335,74){\makebox(30,12){\footnotesize  ~Meas.}}
\put(335,94){\makebox(30,12){\footnotesize  ~Meas.}}
\put(335,114){\makebox(30,12){\footnotesize  ~Meas.}}
\put(275,14){\makebox(30,12){\footnotesize  ~Meas.}}
\put(295,34){\makebox(30,12){\footnotesize  ~Meas.}}
\put(315,54){\makebox(30,12){\footnotesize  ~Meas.}}

\put(280,179){\circle*{4}}
\put(279,180){\line(0,-1){14}}
\put(274,154){\framebox(12,12){$Z$}}
\put(274,134){\framebox(12,12){$Z$}}
\put(270,130){\framebox(20,60){}}
\put(280,26){\vector(0,1){104}}

\put(297,179){\circle*{4}}
\put(297,180){\line(0,-1){44}}
\put(297,139){\circle{8}}
\put(304,159){\circle{8}}
\put(303.5,156){\line(0,1){8}}
\put(291,130){\framebox(20,60){}}
\put(300,46){\vector(0,1){84}}

\put(321,179){\circle{8}}
\put(320,176){\line(0,1){8}}
\put(321,159){\circle*{4}}
\put(320,160){\line(0,-1){24}}
\put(321,140){\circle{8}}
\put(311,130){\framebox(20,60){}}
\put(320,66){\vector(0,1){64}}

\put(340,174){\makebox(20,12){\large$\ket{x}$}}
\put(340,154){\makebox(20,12){\large$\ket{y}$}}
\put(340,134){\makebox(40,12){\large$~\ket{z \oplus xy}$}}

\end{picture}}
		\vspace{-10pt}
		\caption{\label{fig:fttoff} Fault-tolerant implementation of a Toffoli gate.
		}
	\end{figure}

We implemented FT benchmarks for the half-adder 
and full-adder circuits (Figure~\ref{fig:hfadders}) as well as for computing $f(x) = b^x$mod $15$.
Each circuit from Figure \ref{fig:modexp} implements $f(x)$ with a particular coprime base 
value $b$ as a $(2, 4)$ look-up table (LUT).\footnote{A $(k, m)$-LUT 
takes $k$ read-only input bits and $m > \log_2 k$ ancilla bits. For 
each $2^k$ input combination, an LUT produces a pre-determined $m$-bit value, e.g., 
a $(2,4)$-LUT is defined by values $(1,2,4,8)$ or $(1,4,1,4)$.}
The Toffoli gates in all our FT benchmarks are implemented
using the architecture from Figure~\ref{fig:fttoff}.  
Since FT-Toffoli operations require $6$ ancilla registers,
a circuit that implements $t$ FT-Toffolis using a 
$k$-qubit QECC, requires $6tk$ ancilla qubits. Therefore,
to compare with \qplt, we used the $3$-qubit bit-flip code~\cite[Ch. 10]{NielChu} 
instead of the more robust $5$-qubit code in our 
larger benchmarks. 
Our results in Table~\ref{tab:ftbench} show
that \quipu is typically faster than \qplt by several
orders of magnitude and consumes 
$8\times$ less memory for the {\em toffoli},
{\em half-adder} and {\em full-adder} benchmarks.
{\color{blue}
For FT benchmarks that consist of 
stabilizer circuits (shaded rows),
the QuIDD representation remains compact and 
utilizes half as much memory as our frame representation.
}

	\begin{figure}[!b]
		\centering\Large
		\includegraphics[scale=.75]{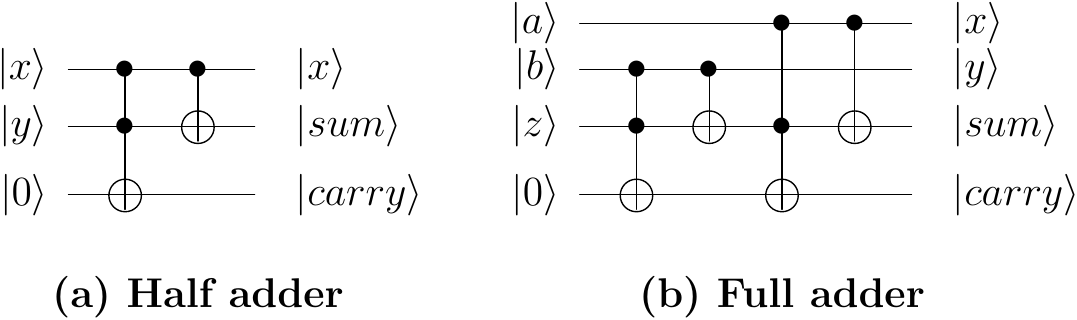}
		\caption{\label{fig:hfadders} Adder circuits implemented in our benchmarks. 
		}
	\end{figure}
	
Table~\ref{tab:ftbench} also shows that our coalescing 
technique is very effective as the maximum size of the
stabilizer-state superposition is much smaller
when multiple frames are used. 
{\color{blue} Since the total number of states 
observed is relatively small, the multithreaded version of 
\quipu exhibited similar	runtime and memory requirements 
for these benchmarks.

	\begin{figure}[!t]	
		\vspace{2mm}
		\hspace{-2mm}
		\centering
		\includegraphics[scale=.55]{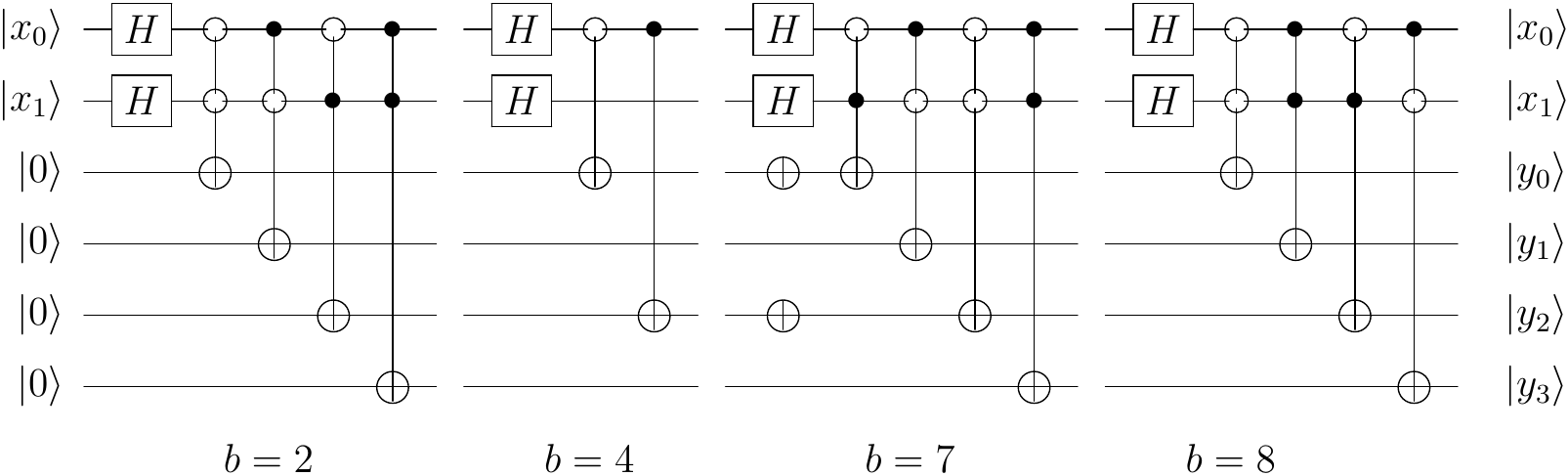}
		\vspace{-4mm}
		\caption{\label{fig:modexp} Mod-exp with $M=15$ implemented 
		as $(2, 4)$-LUTs \cite{MarkovSaeedi} for several coprime base values.
		Negative controls are shown with hollow circles. 
		}
	\end{figure}

\ \\ 
\noindent
{\bf Technology-dependent circuits}. Section~\ref{sec:sframes} outlines how
\quipu supports primitive gate libraries, especially for 
quantum-optical systems where Clifford gates are considered primitive~\cite{Lin}.
Therefore, to simulate FT circuits for
photonic systems, it suffices to decompose $TOFF$ gates into sequences of 
Hadamard, CNOT and $T$ gates as shown in Figure~\ref{fig:toffdecomp}.
Table~\ref{tab:ftbench} reports simulations of our FT
benchmarks using such decompositions.
Since the total number of gates is larger, \quipu is roughly 
$4\times$ slower as compared to direct simulation of $TOFF$ gates.
\qplt takes $>24$ hours to simulate several of these benchmarks
while \quipu takes only several seconds since the majority of the gates 
introduced by the decomposition from Figure~\ref{fig:toffdecomp} 
are Clifford gates.

\begin{figure}[!b]
	\centering\Large
	\includegraphics[scale=.7]{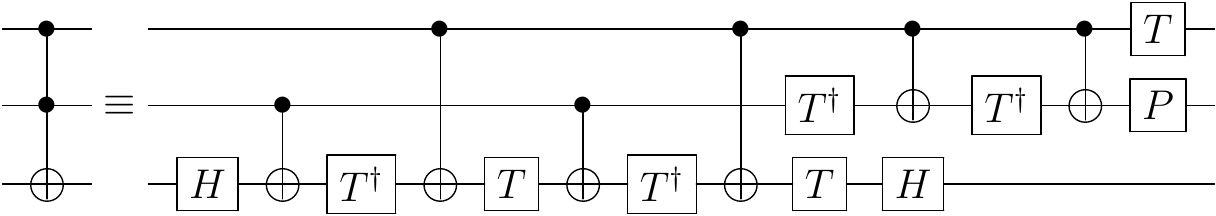}
	\caption{\label{fig:toffdecomp} The Toffoli gate and its decomposition into one-qubit 
	and CNOT gates\cite[Figure 4.9]{NielChu}.}
\end{figure}
}

\section{Conclusions} \label{sec:conclude}


In this work, we developed new techniques for quantum-circuit
simulation based on superpositions of stabilizer states,
avoiding shortcomings in prior work~\cite{AaronGottes}.
To represent such superpositions compactly, we designed a 
new data structure called a {\em stabilizer frame}.
We implemented stabilizer frames and relevant algorithms 
in our software package \quipu. 
Current simulators based on the stabilizer formalism, such as \chp, are 
limited to simulation of stabilizer circuits. Our results show that \quipu performs
asymptotically as fast as \chp on stabilizer circuits with
a linear number of measurement gates, but
simulates certain quantum arithmetic circuits in polynomial time and space 
for input states consisting of equal superpositions of 
computational-basis states. In contrast, \qpro takes exponential time
on such instances. We simulated quantum 
Fourier transform and quantum fault-tolerant circuits with \quipu, and 
the results demonstrate that our stabilizer-based technique 
leads to orders-of-magnitude improvement in runtime and memory 
as compared to \qpro. 
While our technique uses more sophisticated mathematics and quantum-state 
modeling, it is significantly easier to implement and optimize. In particular,
our multithreaded implementation of \quipu exhibited a $2\times$ speed up on 
a four-core server. 

%
{\color{blue}
\noindent
{\bf Future Directions}. The work in~\cite{Yamashita} describes 
an equivalence-checking method for quantum circuits based on the 
notion of a {\em reversible miter}~--~a counterpart of miter circuits
used in equivalence-checking of digital circuits.
An attractive direction for future work is deriving new {\em Clifford 
miters}~--~linear combinations of Clifford operators that 
represent a specific quantum circuit. Clifford miters can speed up 
formal verification by exploiting similarities in circuits and 
the fast equivalence-checking algorithms from~\cite{AaronGottes, Garcia}.

Section~\ref{sec:sframes} described how \quipu incorporates
quantum machine descriptions in the form of primitive gate libraries 
to simulate technology-dependent circuits.
This method can benefit from new decompositions for 
library gates into linear combinations of Pauli or Clifford operators.
Such decompositions can be obtained {\em on the fly} when 
simulating original gates one-by-one in sequence. They can also be 
precomputed and used for a {\em compiled} version of the original circuit, 
where scheduling can be optimized for parallelism and architecture
constraints. 
}

\vspace{2mm}

\noindent
{\bf Acknowledgements}. This work was sponsored in part by the Air 
Force Research Laboratory under agreement FA8750-11-2-0043.

\end{document}